\begin{document}

\title{Finding Distributions that Differ, with False Discovery Rate Control}

\date{\today}

\author{Yonghoon Lee}
\author{Edgar Dobriban}
\author{Eric Tchetgen Tchetgen\footnote{E-mail addresses: 
\texttt{yhoony31@wharton.upenn.edu},
\texttt{dobriban@wharton.upenn.edu},
\texttt{ett@wharton.upenn.edu}
}}

\affil{Department of Statistics and Data Science, The Wharton School,  \\ University of Pennsylvania}

\maketitle

\begin{abstract}
We consider the problem of 
comparing 
a reference distribution
with several other distributions. 
Given a sample from 
both the reference and the comparison groups, 
we aim to identify the comparison groups whose distributions differ from that of the reference group. 
Viewing this as a multiple testing problem, 
we introduce a methodology that provides exact, distribution-free control of the false discovery rate. To do so, we introduce the concept of \textit{batch conformal $p$-values} and demonstrate that 
they satisfy positive regression dependence across the groups~\citep{benjamini2001control}, 
thereby enabling control of the false discovery rate through the Benjamini-Hochberg procedure. 
The proof of positive regression dependence introduces a novel technique for the inductive construction of rank vectors with almost sure dominance under exchangeability. We evaluate the performance of the proposed procedure through simulations, where, despite being distribution-free, in some cases they show performance comparable to methods with knowledge of the data-generating normal distribution; and further have more power than 
direct approaches based on conformal out-of-distribution detection.
Further, we illustrate our methods on 
a Hepatitis C treatment dataset, where they can identify patient groups with large treatment effects;
and on the Current Population Survey dataset, where they can identify sub-population with long work hours.
\end{abstract}

\tableofcontents
\medskip

\section{Introduction}

We consider the problem of identifying groups whose distributions differ from a given reference distribution. 
Such comparisons arise  frequently, for instance, when evaluating whether a factor such as a treatment impacts a medical outcome compared to a control condition. Formally,
given data sampled from a reference distribution $P$ and comparison distributions $P^{(1)}$, $\ldots$, $P^{(K)}$, we consider testing the $K$ null hypotheses $ H_k : P^{(k)} = P, \quad k = 1, 2, \ldots, K$.
For instance, in evaluating the treatment outcome $T$ in its association 
 with an outcome variable $X$,
we may compare the distributions $P_{X \mid T = t_k}$ of the outcome given treatment level $t_k$ with the distributions $P_{X \mid T = t_0}$ of the outcome given the control condition $t_0$, corresponding to the null
$H_k : P_{X \mid T = t_0} = P_{X \mid T = t_k}$. 

The global null when all null hypotheses $H_k$, $k = 1, 2, \ldots, K$, hold, corresponds to the hypothesis of independence, $H_0 : T \indep X$.
This has been widely studied, with 
classical methods such as permutation tests providing
exact type I error control without requiring explicit distributional assumptions \citep{eden1933validity,fisher1935design,pesarin2001multivariate,ernst2004permutation,pesarin2010permutation,pesarin2012review,good2006permutation,anderson2001permutation,kennedy1995randomization,hemerik2018exact}. 

However, these methods are only informative 
about the global null of whether the factor has any influence.
In practice, researchers often require more nuanced insights, specifically identifying which particular groups (or treatment levels) deviate significantly from the reference. 
This problem can be formulated as a multiple testing problem \citep[e.g.,][etc]{shaffer1995multiple,lehmann2022testing}, in which one simultaneously tests hypotheses $H_k : P_{X \mid T = t_0} = P_{X \mid T = t_k}$. Ensuring reliable inference in this multiple testing setting requires 
control of error measures such as the false discovery rate  \citep{benjamini1995controlling}, ideally under minimal distributional assumptions.

A special case of crucial importance is when we have only \emph{one} comparison distribution: this corresponds to the two-sample testing problem where we test 
$P^{(1)} = P$ \citep[e.g.,][etc]{pitman1937significance,kim2020robust,pesarin2010finite,pesarin2013weak,pesarin2015some,kim2020minimax,lehmann2022testing}.
However, 
addressing multiple comparisons poses unique challenges. In particular, dependence among test statistics arising from 
the shared reference observations 
poses a challenge to the application or construction of valid and powerful multiple testing procedures.

We address this challenge, by proposing a novel distribution-free method for controlling the false discovery rate in the multiple-group comparison setting. 
Our contributions can be summarized as follows.

\paragraph{Batch conformal $p$-values for distribution-free testing.}
We introduce the concept of \textit{batch conformal $p$-values}, an adaptation of batch predictive inference \citep{lee2024batch}, to enable rigorous distribution-free tests for equality of distributions between datasets.
Specifically, the batch conformal $p$-values reject the null hypothesis when the quantiles of the distributions they compare are different.

\paragraph{Multiple testing with exact false discovery rate control via positive regression dependence.}
We establish a 
theoretical result demonstrating that batch conformal $p$-values exhibit positive regression dependence across groups \citep{benjamini2001control}. This property justifies the application of the Benjamini-Hochberg procedure \citep{benjamini1995controlling} for exact control of the false discovery rate  when testing for distributional shifts across multiple groups. The cornerstone of our proof is a novel inductive method for constructing rank vectors that exhibit almost sure dominance under exchangeability.

\paragraph{Empirical validation demonstrating near-oracle performance.}
Through simulations, we demonstrate that our distribution-free approach controls the false discovery rate,
while often achieving power comparable to model-based methods under correctly specified parametric settings.
At the same time, it significantly outperforms direct conformal methods designed for out-of-distribution detection \citep{bates2023testing}. 

\paragraph{Illustrations on empirical datasets.}
We illustrate 
our methods using two datasets:
the HALT-C study on Hepatitis C treatment effects \citep{snow2024haltc}
and the Current Population Survey (CPS) \citep{census_cps}. 
Specifically, we identify patient groups with significant treatment responses in the HALT-C dataset and pinpoint subpopulations exhibiting notably longer working hours in the CPS dataset.

\subsection{Notations}
We denote the set of real numbers by $\R$. 
For a set $\X$, $\X^n$ denotes its $n$-dimensional product. 
For a positive integer $n$, we write $[n]$ to denote the set $\{1,2,\ldots,n\}$ and $\mathcal{S}_n$ to denote the set of permutations $\{\sigma : [n] \rightarrow [n],\,  \sigma \textnormal{ is a bijection}\}$. For a distribution $P$ and $\tau \in (0,1)$, the quantile function $Q_\tau(P) = \inf\{t \in \mathbb{R} : \Pp{X \sim Q}{X \le t} \ge \tau\}$ denotes the $\tau$-th quantile of $P$, where the infimum is defined as infinity if the set is empty. If a finite set or multiset $A \subset \mathbb{R}$ is provided as input to $Q_\tau$, then $Q_\tau(A)$ represents the quantile of the uniform distribution over $A$. The indicator of an event $E$ is denoted by $\One{E}$. For vectors $u, v \in \mathbb{R}^d$, we write $u \preceq v$ if $u_i \leq v_i$ for all $i \in [d]$.

\subsection{Problem setup}
\label{ps}

We consider a setting where our observations are drawn from either a 
\emph{reference distribution}, or 
several \emph{comparison distributions}.
Specifically, 
we
observe a \emph{reference dataset} $(X_1, \ldots, X_n) \subset \X^n$
drawn i.i.d.~from a reference distribution $P$.
Moreover, we observe $K$ independent \emph{comparison datasets} 
$G_1, \ldots, G_K$, where each $G_k = (X_1^{(k)}, X_2^{(k)}, \ldots, X_{n_k}^{(k)}) \subset \X^{n_k}$ is drawn i.i.d.~from a comparison distribution $P^{(k)}$. 
We are interested in selecting groups $k$ whose comparison distribution $P^{(k)}$ is different from the reference distribution. An illustration of the problem setup is given in Figure~\ref{fig:diagram}.

\begin{figure}[!htbp]
    \centering
    \includegraphics[width=0.8\textwidth]{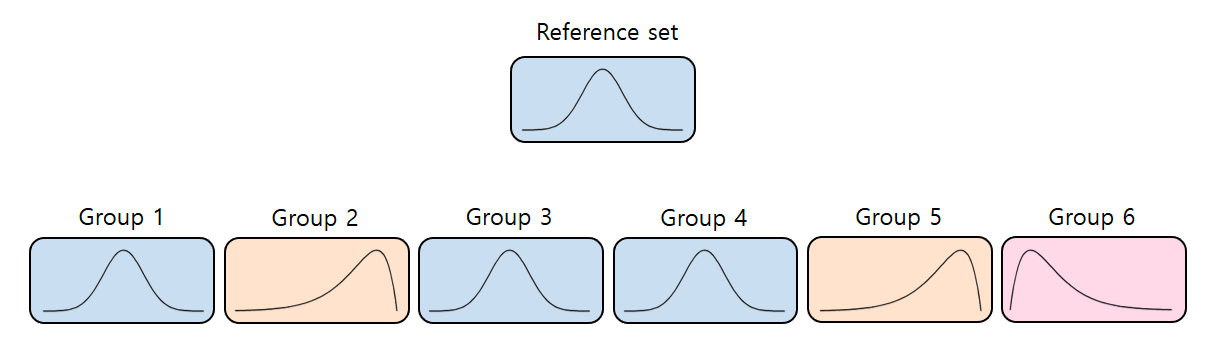}
    \caption{Visual representation of the problem setup. In this example, the objective is to correctly identify the comparison groups---groups 2, 5, and 6---whose distributions differ from that of the reference set, using datapoints from each group.}
    \label{fig:diagram}
\end{figure}

Specifically, we consider the multiple testing problem
with the hypotheses
\begin{align}\label{h}
    H_k : P^{(k)} = P, \quad k = 1, 2, \ldots, K.
\end{align}

As usual in multiple hypothesis testing \citep{lehmann2022testing}, 
our goal is to use the data to select a subset $S\subset \{1, \ldots, K\}$ aiming to capture most of the indices $k$ for which $P^{(k)} \neq P$.
At the same time, we aim to control the number of selected indices $k$ where $P^{(k)} = P$, e.g., where a treatment of interest has no effect.

\subsection{Related work}

The related literature is too vast to summarize here; therefore, we only review the most relevant prior works. 
Testing the equality of distributions has been investigated in various contexts, ranging from parametric methods, such as the two-sample $t$-test~\citet{student1908probable}, to nonparametric approaches, including the Kolmogorov–Smirnov test~\citet{an1933sulla, smirnov1948table}, the energy distance method~\citet{szekely2004testing}, and the maximum mean discrepancy test~\citet{gretton2012kernel}.
Additionally, distribution-free tests, such as the permutation test~\citep{eden1933validity,fisher1935design,pesarin2001multivariate,ernst2004permutation,pesarin2010permutation,pesarin2012review,good2006permutation,anderson2001permutation,kennedy1995randomization,hemerik2018exact} 
and the rank-sum test~\citet{mann1947test}, have been widely studied. For more recent developments, see~\citet{rosenbaum2005exact} and~\citet{biswas2014distribution}.

While the aforementioned studies primarily focus on comparing two distributions, our work addresses the multiple hypothesis testing problem, wherein multiple comparison distributions are compared against a reference distribution. 
Similar problems
have been explored in several works, starting with the Dunnett procedure \citep{dunnett1955multiple} for normally distributed data. \citet{hothorn2020comparisons} provided a simulation study comparing the Dunnet $t$-test with closed testing procedures. 
In the distribution-free inference literature, such many-to-one comparisons have been extensively explored in the context of outlier/novelty detection. \citet{bates2023testing} presented a strategy for using conformal $p$-values to detect outliers while controlling the false discovery rate in a distribution-free setting. 
Their method was extended in \citet{marandon2024adaptive} and \citet{magnani2024collective}. Additionally, \citet{lee2025full} proposed a methodology based on full-conformal e-values \citep{vovk2021values}, enabling more efficient use of the data.

Controlling the false discovery rate \citet{benjamini1995controlling} in multiple testing problems has been a topic of significant interest, starting with the Benjamini-Hochberg  procedure introduced by \citet{benjamini1995controlling}. \citet{benjamini2006adaptive} propose an extension of the procedure with tighter false discovery rate control. Several subsequent works have studied  conditions beyond independence on the input $p$-values under which the Benjamini-Hochberg procedure achieves valid false discovery rate control.
\citet{benjamini2001control} showed that this holds under the more general condition of 
positive regression dependence  that they introduced.
The theoretical properties under positive regression dependence
were further studied in~\citet{sarkar2002some}. \citet{blanchard2008two} studied conditions for false discovery rate control with extensions of the Benjamini-Hochberg procedure.
See \cite{benjamini2010discovering} for a review.

\section{Methodology and theoretical results}

We now introduce our methods identifying distribution shifts across multiple groups. 
The basic approach involves applying the Benjamini-Hochberg  procedure \cite{benjamini1995controlling} after constructing valid $p$-values.
We first provide a brief overview of the Benjamini-Hochberg procedure, 
examine a few simple methods for the multiple comparison problem, and discuss their limitations.
Then, we introduce our procedure and prove that it controls the false discovery rate.
Finally, we discuss how our 
methods specialize to the simpler but important case of comparing \emph{two} distributions.

\subsection{Review of the Benjamini-Hochberg procedure}
\label{bhrev}

In the setting of multiple hypothesis testing, 
the Benjamini-Hochberg  procedure~\citep{benjamini1995controlling} is a widely-used methodology.
Given a target level $\alpha \in (0,1)$ and $p$-values $p_1, \dots, p_M$ for hypotheses $H_1, \dots, H_M$, the Benjamini-Hochberg procedure computes $k^* = \max\{k : p_{(k)} \leq \frac{k}{M} \alpha\}$, where $p_{(k)}$ denotes the $k$-th order statistic of the $p$-values. 
It then rejects all null hypotheses for which the corresponding $p$-value is less than or equal to $p_{(k^*)}$.

The false discovery rate is defined as
$\EE{V/\max(R,1)}$, where $V$ is the number of false discoveries, i.e., the number of rejected hypotheses that are true or null, and $R$ is the number of rejected hypotheses. The Benjamini-Hochberg procedure is known to control the false discovery rate at the predefined level $\alpha$, when the $p$-values $p_1, \dots, p_M$ are independent \citep{benjamini1995controlling}, or when they satisfy the \textit{positive regression dependence on a subset} property on the set of true null hypotheses \citep{benjamini2001control}. 

To define this property, we first introduce the following definitions: For a positive integer $d$ and two vectors $v = (v_1, \ldots, v_d)^\top$ and $w = (w_1, \ldots, w_d)^\top$ in $\mathbb{R}^d$, we write $v \preceq w$ if $v_j \leq w_j$ for all $j \in [d]$. A set $A \subset \mathbb{R}^d$ is called increasing if, for any $x \in A$, $x \preceq y$ implies that $y \in A$.

\begin{definition}[\citet{benjamini2001control}]\label{def:prds}
For a set $I\subset \{1, \ldots, K\}$,
    a random vector $X = (X_1,X_2$, $\ldots$, $X_K)$ is positive regression dependent on $I$ if for all $k \in I$, the conditional probability $x\mapsto \PPst{X \in A}{X_k = x}$ is nondecreasing on its domain as a function of  $x \in \R$ for any increasing set $A \subset \R^{K}$.
\end{definition}

\citet{benjamini2001control} shows that if the $p$-value vector $(p_1,\ldots,p_M)^\top$ is positive regression dependent on the set of nulls $I_0 = \{k : H_k \textnormal{ is true}\}$, then the Benjamini-Hochberg procedure controls the false discovery rate at the desired level $\alpha$, i.e., $\EE{V/\max(R,1)} \le \alpha$.

\subsection{Existing methods}

In this section, we discuss several existing---i.e., baseline---approaches that one may consider 
for constructing $p$-values for our selection problem, and illustrate their limitations.

\subsubsection{Partitioning the reference dataset}

As discussed in the previous section, one option for controlling the false discovery rate is to construct independent $p$-values for $H_1, \dots, H_K$, and then apply the Benjamini-Hochberg procedure.
The methods discussed in Section~\ref{sec:two_sample} can be used for constructing a $p$-value for an individual hypothesis. 
However, datapoints should be used \emph{only once}, 
in order to ensure the independence of the $p$-values. 
To achieve this, 
one could split the 
reference dataset into $K$ 
subsets,
and use each subset to construct a $p$-value for each hypothesis $H_k$, $k\in [K]$, from \eqref{h}. 
For example, if $n = Kq + r$ for some positive integer $q$ and $0 \leq r < q$, then one can use the dataset $\{X_{(k-1)q+1}, X_{(k-1)q+2}, \dots, X_{(k-1)q+q-1}\}$ to construct a permutation test $p$-value for $H_k$.

However, this approach
becomes problematic when the reference sample size is small.
In that case, splitting the reference data into $K$ subsets leads to very small datasets what will lead to inaccurate $p$-values with low power.
To handle this setting, 
it is desired to develop methods that permit reusing the reference data.

\subsubsection{Testing with conformal p-values after subsampling}\label{sec:subsampling}

An approach that allows reusing
the reference data is based on 
conformal testing
for  outliers, which has appeared in \cite{bates2023testing} as part of a broader effort to construct calibration-set conditional $p$-values for outlier detection; see also \cite{mary2022semi}.

To introduce this, 
let $s : \X \rightarrow \R$ be a \textit{score} function that maps each observation to a real value and is constructed independently of the data.
In practice, it is standard in related problems in conformal prediction to split the data into two subsets, using one to construct the score function and the other as the reference dataset \citep{vovk2005algorithmic}. 
We choose a score such that 
larger scores are more likely to occur under $P^{(k)}$  when $P\neq P^{(k)}$. 
Let $S_i$, $i\in[n]$, and $S_i^{(k)}$, $k\in[K]$, $i\in[n_k]$,
denote the \emph{reference scores} and the \emph{comparison scores} of the $k$-th comparison group, respectively. 

To use 
conformal testing
for  outliers \citep{bates2023testing},
we randomly choose one datapoint from each $G_k$ and
apply the Benjamini-Hochberg procedure to the following 
$p$-values:
\[p_k = \frac{\sum_{i=1}^n \mathbf{1}\{S_{i_k^*}^{(k)} \leq S_i\} + 1}{n+1}, \quad \text{where } i_k^* \sim \text{Unif}(\{1,2,\ldots, n_k\}).\]
\cite{bates2023testing} consider detecting outliers, so they consider one datapoint from each distribution, which in our context implies that their method requires 
subsampling.
\citet{bates2023testing} shows that the
above \emph{subsampling conformal $p$-values} 
satisfy positive regression dependence, so that the Benjamini-Hochberg procedure ensures false discovery rate control.

However, 
this procedure discards most of the information from the comparison data, relying only on one randomly selected data point from each group.
In our experiments, we show that this can 
lead to low power compared to our proposed method.
Moreover, the above line of work also provides another interpretation and possible application of our work to \emph{batch outlier detection}, where we have batched observations and use our method to detect outlier batches.

\subsubsection{Applying Benjamini-Yekutieli procedure}\label{sec:BY}

Alternatively, one may consider applying the Benjamini–Yekutieli procedure~\citep{benjamini2001control} instead of the Benjamini–Hochberg procedure to any two-sample $p$-values. The Benjamini–Yekutieli procedure applies the Benjamini–Hochberg procedure at a more conservative level $\alpha / \sum_{i=1}^K (1/i)$ and provides valid false discovery control under arbitrarily dependent $p$-values. However, this procedure typically produces quite conservative results; therefore, we explore a way to construct $p$-values that allow the use of the Benjamini–Hochberg procedure.

\subsection{Proposed method: testing with batch conformal p-values}

We now present our main procedure, which allows the repeated use of reference data. 
Consider a score function $s : \X \rightarrow \R$ constructed in advance, independently of the data used for inference.
See \cite{vovk2005algorithmic,angelopoulos2021gentle} for standard examples.
For example, one can construct an estimated mean function $\hat{\mu}(\cdot)$ using related data,
and then work with the residual score $s: (x,y) \mapsto |y-\hat{\mu}(x)|$. If we are interested to compare the outcomes themselves, we can choose 
$s: (x,y) \mapsto y$. 

Further, select integers $\eta_k \in [n_k]$ for each $k \in [K]$, such that interest centers on the $\eta_k/n_k$th quantile of the distribution of the scores.
Then, we define the \textit{batch conformal $p$-value} 
for $H_k$ as\footnote{Here, for non-negative integers $a\le b$, $\binom{b}{a} = b!/(a!(b-a)!)$ denotes the binomial coefficient, where $x!=x\cdot (x-1) \cdot \ldots\cdot  1$ is the factorial of a non-negative integer $x$.}
\begin{equation}\label{eqn:p_val_k}
    p_k = \sum_{i=1}^n \frac{\binom{i+\eta_k-2}{\eta_k-1}\binom{n+n_k-i-\eta_k+1}{n_k-\eta_k}}{\binom{n+n_k}{n_k}} \cdot \One{S_{(\eta_k)}^{(k)} \leq S_{(i)}} + \frac{\binom{n+\eta_k-1}{\eta_k-1}}{\binom{n+n_k}{n_k}}.
\end{equation}
The $i$th term in this expression computes the probability that the \(\eta_k\)-th smallest test score among a random size-\(n_k\) subset drawn from a combined population of \(n + n_k\) scores (consisting of both calibration and test points) is \emph{at least as large} as the \(i\)-th smallest calibration score \citep[e.g.,][p. 243]{wilks1962mathematical}. 
The first term sums over all \( i \in [n] \), weighting by the probability that the \(\eta_k\)-th order statistic lands at position \( i \), and then checking if the observed test statistic \( S_{(\eta_k)}^{(k)} \le S_{(i)} \).
The final term corresponds to the probability that the \(\eta_k\)-th smallest element in the random sample is larger than all \( S_{(i)} \).
Intuitively, testing with the batch conformal $p$-value rejects the null $H_k$ if the 
quantile of the comparison scores $S_{(\eta_k)}^{(k)}$---which can be viewed as the test statistic for $H_k$---is unusually large compared to the reference scores. 

For example, if one chooses to use the median as the test statistic, one can set, e.g., $\eta_k = \lceil 0.5 \cdot n_k \rceil$.
We remark that statistical inference for quantiles has been widely studied \citep{kosorok1999two,sgouropoulos2015matching,yang2017smoothed,lou2025high}. However, we are not aware of our specific method being widely studied.
Specifically, as we will show, our method is valid in finite samples without distributional assumptions, while being powerful when the quantiles of the distributions are different. Thus, it
has different properties from other methods such as those in \citep{kosorok1999two} which are valid 
only asymptotically and require certain distributional assumptions.

When the comparison dataset has \emph{one datapoint}, i.e., $n_k=\eta_k=1$, the batch conformal $p$-value coincides with the standard conformal $p$-value, given by $\frac{\sum_{i=1}^n \One{S_{1}^{(k)} \leq S_i}+1}{n+1}$. 
We will discuss the choice of the score $s$ later. Further, we will discuss the use of batch conformal $p$-values for the special case of two-sample testing in Section \ref{sec:two_sample}.

The idea of batch conformal $p$-values is motivated by the predictive inference method for a quantile of multiple test points developed in~\citet{lee2024batch}, leveraging classical results about the distribution of quantiles in finite populations, see e.g., \cite{wilks1962mathematical}. 
It follows from existing results that the
formulas in \eqref{eqn:p_val_k}
lead to marginally valid $p$-values when  the scores $S_1, \ldots, S_{n}, S_{1}^{(k)}, \ldots, S_{n_k}^{(k)}$ are exchangeable.
For completeness, we provide a self-contained proof 
in Appendix~\ref{pfprop:pval}.
As a consequence, 
$p_1,p_2,\ldots,p_K$ forms a marginally valid
vector of $p$-values.

Next, 
let $I_0 = \{k \in [K] : \text{$H_k$ is true}\}$ denote the set of group indices for which the null hypotheses from \eqref{h} hold, and let $K_0 = |I_0|$ denote the number of true nulls. 
We now prove our main result, 
that applying the Benjamini-Hochberg procedure with the batch conformal $p$-values ensures valid false discovery rate control.

\begin{theorem}[Main result: False discovery rate control]\label{thm:prds}
    Let the reference scores $(S_i)_{i \in [n]}$ and the comparison scores 
    $\smash{(S_j^{(k)})_{j \in [n_k],k \in [K]}}$ be almost surely all distinct, 
    and the score groups $(S_i)_{i \in [n]}$ and
    $\smash{\{(S_j^{(k)})_{j \in [n_k]} : k \in [K]\}}$ be independent.
    Then the batch conformal $p$-values $p_1,p_2,\ldots,p_K$ defined in~\eqref{eqn:p_val_k} satisfy the positive regression dependence condition (Definition \ref{def:prds}) on the set $I_0$ of true nulls. 
    Consequently, the Benjamini-Hochberg procedure at level $\alpha$ (Section \ref{bhrev} and Algorithm \ref{alg:batch}),
    applied with $(p_k)_{k \in [K]}$, controls the false discovery rate at level $K_0\alpha/K \le \alpha$.
\end{theorem}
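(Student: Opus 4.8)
The plan is to reduce the statement about the $p$-values to a purely combinatorial statement about ranks, and then to establish that statement through a monotone coupling constructed inductively. First I would observe that, since the scores are almost surely distinct, each $p_k$ in \eqref{eqn:p_val_k} is a \emph{strictly decreasing} function of the integer
$N_k := |\{i \in [n] : S_i < S^{(k)}_{(\eta_k)}\}|$,
the number of reference scores lying below the $\eta_k$-th comparison order statistic: indeed $\sum_{i=1}^n \One{S^{(k)}_{(\eta_k)} \le S_{(i)}} = n - N_k$, so larger $N_k$ removes terms from the (positively weighted) sum. Write $p_k = \psi_k(N_k)$. Because each $\psi_k$ is strictly decreasing and injective on its support $\{0,\dots,n\}$, conditioning on $p_k = x$ is identical to conditioning on $N_k = \psi_k^{-1}(x)$, and the coordinatewise map $(n_j)_j \mapsto (\psi_j(n_j))_j$ sends increasing sets in $p$-space to \emph{decreasing} sets in $N$-space. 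Hence PRDS of $(p_1,\dots,p_K)$ on $I_0$ in the sense of Definition~\ref{def:prds} is \emph{equivalent} to $(N_1,\dots,N_K)$ being PRDS on $I_0$, i.e.\ to $m \mapsto \mathbb{P}\big[(N_1,\dots,N_K)\in \tilde A \mid N_k = m\big]$ being nondecreasing for every increasing $\tilde A$ and every $k \in I_0$.

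Next I would exploit conditional independence given the reference sample. Conditioning on the reference order statistics $\mathbf{S}_{(\cdot)} = (S_{(1)},\dots,S_{(n)})$, the mutual independence of the comparison groups (and their independence from the reference data) gives $N_k \perp (N_j)_{j \ne k}$ given $\mathbf{S}_{(\cdot)}$. Moreover each $N_j = |\{i : S_i < S^{(j)}_{(\eta_j)}\}|$ is, for fixed group-$j$ data, a nonincreasing function of $\mathbf{S}_{(\cdot)}$, and $S^{(j)}_{(\eta_j)}$ is independent of $\mathbf{S}_{(\cdot)}$; so for any increasing set $\tilde A'$ the function $h(\mathbf{S}_{(\cdot)}) := \mathbb{P}\big[(N_j)_{j\ne k}\in \tilde A' \mid \mathbf{S}_{(\cdot)}\big]$ is nonincreasing in $\mathbf{S}_{(\cdot)}$. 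Writing the $m$-slice $\tilde A_m = \{(n_j)_{j\ne k} : (n_j)_{j\ne k}$ together with $n_k=m$ lies in $\tilde A\}$, which is increasing and nested ($\tilde A_m \subseteq \tilde A_{m'}$ for $m \le m'$ by increasingness of $\tilde A$), I would use
$\mathbb{P}[(N_j)_j\in \tilde A \mid N_k = m] = \EE{h_m(\mathbf{S}_{(\cdot)}) \mid N_k = m}$
with $h_m(\mathbf{S}_{(\cdot)}) = \mathbb{P}[(N_j)_{j\ne k}\in \tilde A_m \mid \mathbf{S}_{(\cdot)}]$. The set-nesting handles the $k$-th coordinate, so the whole claim comes down to a single monotonicity in $m$ of $\EE{h_m(\mathbf{S}_{(\cdot)}) \mid N_k = m}$ for a fixed increasing slice.

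The main obstacle, and the heart of the argument, is the following stochastic-monotonicity fact: for a null index $k \in I_0$, the conditional law of $\mathbf{S}_{(\cdot)}$ given $N_k = m$ is stochastically nonincreasing in $m$, with an almost-sure coupling witnessing the dominance. Here I would use that under $H_k$ the reference scores and the group-$k$ scores are jointly exchangeable, so conditioning on the pooled multiset of the $n+n_k$ scores makes the assignment of labels (reference vs.\ comparison) uniform over the $\binom{n+n_k}{n_k}$ choices; crucially, this pooled multiset is independent of $N_k$, so the conditional laws for $m$ and $m+1$ can be coupled on a \emph{common} pool. Since $N_k = m$ is exactly the event that the $\eta_k$-th comparison value occupies pooled rank $m+\eta_k$, it remains to show that pushing that value to a higher pooled rank forces the sorted reference positions to be coordinatewise smaller. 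This is the novel inductive construction: I would build, position by position along the pooled order, a coupling of the two uniform labelings under which every reference order statistic for $N_k=m+1$ is almost surely $\le$ the corresponding one for $N_k = m$, i.e.\ almost-sure dominance of rank vectors under exchangeability. Granting this, the nonincreasingness of $h_m$ from the previous step yields $\EE{h_m \mid N_k = m'} \ge \EE{h_m \mid N_k = m}$ for $m' \ge m$, which together with $\tilde A_m \subseteq \tilde A_{m'}$ gives PRDS of $(N_k)$ and hence of $(p_k)$.

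Finally I would conclude the FDR bound. Under $H_k$ the scores $S_1,\dots,S_n,S^{(k)}_1,\dots,S^{(k)}_{n_k}$ are exchangeable, so by the marginal validity already recorded before the theorem each null $p_k$ is (super-)uniform; combined with positive regression dependence on $I_0$, the result of \citet{benjamini2001control} (reviewed in Section~\ref{bhrev}) applies to the Benjamini--Hochberg procedure and gives $\EE{V/\max(R,1)} \le K_0\alpha/K \le \alpha$. I expect the inductive coupling of labelings in the third paragraph to be the genuinely hard step; the reduction to ranks, the conditional-independence transfer, and the set-monotonicity bookkeeping are routine once that lemma is in hand.
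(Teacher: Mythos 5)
Your proposal follows the paper's own route essentially step for step: the reduction from conditioning on $p_k$ to conditioning on the count $N_k$ (the paper conditions on the pooled rank $R_{\eta_k} = N_k + \eta_k$), the transfer through the reference order statistics using independence of the comparison groups (the paper's display \eqref{eqn:prds_1}), the identification of the key claim that the conditional law of the reference order statistics given $N_k$ is stochastically monotone with an almost-sure coupling, and the final appeal to \citet[Theorem 1.2]{benjamini2001control}. The bookkeeping (strict monotonicity of $\psi_k$, complementation between increasing and decreasing sets, the nested slices $\tilde A_m$, and the monotonicity of $h_m$ in the reference order statistics) is all sound.

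However, there is a genuine gap exactly at the step you label ``granting this'': the monotone coupling is asserted, not constructed, and it is the entire technical content of the theorem---what the paper presents as its main novel technique. To prove it, the paper needs two ingredients your sketch does not supply. First, a structural lemma about the ordered comparison ranks $R_1 < \cdots < R_m$ given the pooled order statistics (Lemma~\ref{lem:ranks_2}): (i) $(R_1,\ldots,R_{t-1})$ and $(R_{t+1},\ldots,R_m)$ are conditionally independent given $R_t$, and (ii) the conditional law of $R_{t-1}$ (resp.\ $R_{t+1}$) given $R_t=q$ is stochastically nondecreasing in $q$; both are established by explicit binomial counting from the uniformity of the rank vector. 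Second, Strassen's theorem \citep{strassen1965existence} must be invoked once per adjacent pair of ranks, with the resulting coupling pairs built mutually independent and independent of the true ranks, so that when they are chained outward from the anchored coordinate ($\tilde R_{\eta_k}=r_1$, $\hat R_{\eta_k}=r_2$, then downward to index $1$ and upward to index $m$) the assembled vectors satisfy $(\tilde R_1,\ldots,\tilde R_m)\preceq(\hat R_1,\ldots,\hat R_m)$ almost surely \emph{and} retain the two target conditional distributions; verifying the latter uses the Markov-type property (i) at every step of the induction. A literal ``position by position along the pooled order'' coupling of the two uniform labelings is not obviously well-defined: the conditional label probabilities at each pooled position depend on the constraint at position $m+\eta_k$ versus $m+1+\eta_k$ in a way that does not visibly preserve coordinatewise dominance, which is precisely why the paper instead couples the comparison rank vector coordinate by coordinate anchored at $R_{\eta_k}$. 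So your skeleton is the right one and matches the paper, but the hard lemma that makes it a proof is missing.
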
 

\begin{remark}
    The validity of batch conformal $p$-values implies that we can in fact test the following weaker null hypotheses:
    \[H_k : \text{ The $n+n_k$ scores } (S_i)_{i \in [n]} \text{ and } (S_j^{(k)})_{j \in [n_k]} \text{ are exchangeable}, \qquad k \in [K].\]
    The proof of Theorem~\ref{thm:prds} ensures that the Benjamini-Hochberg procedure applied to the batch conformal $p$-values controls the false discovery rate for these nulls, 
    if the comparison score groups $\{(S_j^{(k)})_{j \in [n_k]} : k \in [K]\}$ are independent conditional on the reference score group.

Note that a statement $H_k^0$, which states that the data points $(X_i, Y_i)_{i \in [n]}$ and $(X_j^{(k)}, Y_j^{(k)})_{j \in [n_k]}$ are exchangeable, implies the above $H_k$, as long as the score function $s$ is either fixed or constructed using a data split. Therefore, the proposed procedure also provides FDR control for the nulls $H_k^0$, $k \in [K]$.
\end{remark}

\begin{remark}
The positive regression dependence property established in Theorem~\ref{thm:prds} also has implications for the global null testing problem. Specifically, consider the problem of testing whether there exists \textit{at least one} comparison dataset whose distribution differs from that of the reference data. A straightforward approach to control the type I error would be the Bonferroni test, which accommodates arbitrary dependence among p-values and can therefore be applied with any existing two-sample p-values. However, the Bonferroni test is well known to be conservative. 

In contrast, Simes test offers a less conservative alternative but requires certain conditions on the p-values. \citet{sarkar2008simes} shows that Simes test remains valid under positive regression dependence, implying that, by Theorem~\ref{thm:prds}, the batch conformal p-values can also be used within Simes test while ensuring valid type I error control.
\end{remark}

\begin{algorithm}
\caption{Distribution-shift detection with batch conformal $p$-values}
\label{alg:batch}
{\bf Input:} Reference data $\mathcal{D}_n= \{X_1,X_2,\ldots,X_n\}$. Comparison sets $(X_1^{(k)}, X_2^{(k)}, \ldots, X_{n_k}^{(k)})_{1 \leq k \leq K}$. Score function $s: \X \rightarrow \R$. Target level $\alpha$. Rank vector $(\eta_1,\ldots,\eta_K) \in [n_1] \times \ldots \times [n_K]$.

{\bf Step 1:} Compute the scores $S_i = s(X_i)$ for $1 \leq i \leq n$ and $S_i^{(k)}= s(X_i^{(k)})$ for $1 \leq k \leq K$ and $1 \leq i \leq n_k$.

{\bf Step 2:} Compute the batch conformal $p$-values $p_1,\ldots,p_K$ as
\[p_k = \sum_{i=1}^n \frac{\binom{i+\eta_k-2}{\eta_k-1}\binom{n+n_k-i-\eta_k+1}{n_k-\eta_k}}{\binom{n+n_k}{n_k}} \cdot \One{S_{(\eta_k)}^{(k)} \leq S_{(i)}} + \frac{\binom{n+\eta_k-1}{\eta_k-1}}{\binom{n+n_k}{n_k}},\; k \in [K],\]
and sort them as $p_{(1)} \leq \ldots \leq p_{(K)}$.

{\bf Step 3:} Run the Benjamini-Hochberg procedure to find $k^* = \max\{k : p_{(k)} \leq \frac{k}{K}\alpha\}.$

{\bf Return:} Rejection set $\mathcal{R} = \{k \in [K] : p_k \leq p_{(k^*)}\}$ ($\mathcal{R} = \emptyset$ if $\min_{1 \leq k \leq K} \frac{K}{k}p_{(k)} > \alpha$).
\end{algorithm}

The overall procedure is described in Algorithm~\ref{alg:batch}, and the proof of Theorem~\ref{thm:prds} is deferred to Appendix~\ref{pfthm:prds}. 
The proof of the theorem introduces a novel technique, which differs from proof techniques used in related results such as in \cite{bates2023testing}. 
Our approach introduces a technique for constructing random variables $(\tilde{R}_1, \ldots, \tilde{R}_m)$ and $(\hat{R}_1, \ldots, \hat{R}_m)$ inductively over $m$, ensuring that they have the same distribution as the ranks of the test scores conditional on two different values of $p_k$, while also ensuring that $(\tilde{R}_1, \ldots, \tilde{R}_m) \preceq (\hat{R}_1, \ldots, \hat{R}_m)$ holds almost surely. 
To enable this construction, we show that the conditional distribution of the ranks $R_{t-1}$ given $R_t$ and the set of scores is stochastically increasing in $R_t$ (Lemma~\ref{lem:ranks_2}).
Then Strassen's theorem~\citep{strassen1965existence} is applied repeatedly to construct appropriate random variables
$\tilde{R}_{t-1} \leq \hat{R}_{t-1}$ having the same conditional distributions as  $R_{t-1}$ given the set of scores  
and two appropriate values of $R_t$. 
This construction may have broader use cases beyond the scope of our paper.

\subsubsection{Choice of the score function}

By Theorem~\ref{thm:prds}, Algorithm~\ref{alg:batch} ensures valid false discovery rate control for any score function
constructed independently of the reference and comparison data. However, the power of the test depends on this choice. 
Here, we discuss several possible choices for the score function across different scenarios.

\paragraph{One-dimensional outcome.}
Let the observed outcomes  be scalars: 
$Y_1,\cdots,Y_n \in \R$ for the reference set, and $Y_1^{(k)},\cdots,Y_{n_k}^{(k)} \in \R$, $k \in [K]$ for each comparison group $k \in [K]$.
To detect groups with distribution shift, 
one can set the score function as simply $s(y) = y$ or $s(y) = -y$ for all $y$, depending on the direction of the shift of interest. 
For two-sided detection, one direct option is to split the reference data into two sets, use one set to construct an estimate $\hat{\mu}$ of the ``center" (e.g., mean or median), and then set the score function as $s:y \mapsto |y - \hat{\mu}|$ to run the procedure with the second split.

\paragraph{One-dimensional outcome with side information.}

Consider feature-label pairs
$(X_1, Y_1)$, $ \ldots$, $ (X_n, Y_n)$ $ \in \mathbb{R}^d \times \mathbb{R}$ for the reference group, and similarly for the comparison groups.
In this setting, standard nonconformity scores \citep{vovk2005algorithmic} can be used---for example, one can apply data splitting to the reference set and use one split to construct scores such as \citep{romano2019conformalized}
\[s:(x, y) \mapsto |y - \hat{\mu}(x)| \text{ or } s:(x, y) \mapsto \max\{\hat{q}_{\alpha/2}(x) - y, y - \hat{q}_{1-\alpha/2}(x)\},\]
where $\hat{\mu}(\cdot)$ is an estimator of the conditional mean $\mathbb{E}[Y \mid X]$ and $\hat{q}_{\alpha/2}(\cdot)$, $\hat{q}_{1-\alpha/2}(\cdot)$ are estimators of $\alpha/2,1-\alpha/2$ conditional quantiles. 
Being error measures for an estimator 
fitted to be small under the null (reference) distribution,
we expect these scores to be larger when there is a shift.

\paragraph{Multivariate outcome variable.}

Now consider features with
multivariate outcomes: $(X_1, Y_1)$, $ \ldots$, $(X_n, Y_n) $ $\in \mathbb{R}^d \times \mathbb{R}^p$, where each $Y_i = (Y_{i1}, \ldots, Y_{ip})^\top \in \mathbb{R}^p$---and similarly for the comparison groups.
One option is to use the training data (one split of the reference dataset) to construct $\hat{\mu}_1, \dots, \hat{\mu}_p$ such that $\hat{\mu}_j(X)$ predicts the $j$-th component of the outcome $Y$, and then combine them. 
For example, one can set the score $s: \R^d \times \R^p \rightarrow \R$ as 
\(s:(x, y) \mapsto \sum_{j=1}^p |y_j - \hat{\mu}_j(x)|^2 = \|\hat{r}(x, y)\|^2,\)
where $\hat{r}(x, y) = \big(|y_j - \hat{\mu}_j(x)|\big)_{1 \leq j \leq p}$ denotes the residual vector.

To address the potentially different scales and dependencies between the components, one can alternatively set the score as 
\(s:(x, y) \mapsto \hat{r}(x, y)^\top \hat{S}^{-1} \cdot \hat{r}(x, y),\)
where $\hat{S}$ denotes the sample covariance matrix of the residual vectors evaluated on the training data. To better capture the dependence structure within the outcome vector, one can also consider constructing the residual vector in the form of
\[\hat{r}:(x, y) \mapsto \big(y_1 - \hat{\mu}_1(x), \, y_2 - \hat{\mu}_2(x, y_1), \, y_3 - \hat{\mu}_3(x, y_1, y_2), \, \dots, \, y_p - \hat{\mu}_p(x, y_1, \dots, y_{p-1})\big)^\top,\]
with $\hat{\mu}_j : \R^d \times \R^{j-1} \rightarrow \R$ constructed accordingly for each $j \in [p]$.

\begin{remark}[Choice of the hyperparameters $\eta_k$]
The proposed procedure involves hyperparameters $(\eta_k)_{k \in [K]}$, which need to be selected by the practitioner. A natural default choice is the median, i.e., setting $\eta_k = \lceil n_k/2 \rceil$, although other choices can be made depending on the target of interest. For example, if $X$ is one-dimensional and $s(x) = x$, and the goal is to detect shifts in the tails of the outcome distribution, a larger value such as $\eta_k = \lceil 0.9 \cdot n_k \rceil$ can be used.

Alternatively, for more complex settings---e.g., with multivariate $X$ and non-identity scores---, one may consider tuning the hyperparameter based on the data to maximize power. For example, for each $k \in [K]$, let $G_k^0 = (X_1^{(k)}, \ldots, X_{m_k}^{(k)})$ denote a split of the comparison data $G_k$ with $m_k < n_k$, and let $G^0 = (X_1, \ldots, X_m)$ be a split of the reference data (which can be the same split used for constructing the score function). Then, for values $q$ in a grid, e.g., $(0.1, 0.2, \ldots, 0.9)$, we compute the (one-sided) difference between the $q$-sample quantiles of the score sets $(s(X_1^{(k)}), \ldots, s(X_{m_k}^{(k)}))$ and $(s(X_1), \ldots, s(X_m))$, and select the value $q_k$ that maximizes this difference. The remaining split $G_k^1 = (X_{m_k+1}^{(k)}, \ldots, X_{n_k}^{(k)})$ is then used for inference, with $\eta_k = \lceil q_k \cdot (n_k - m_k) \rceil$.
\end{remark}

\subsection{Special case: distribution-free two-sample test}\label{sec:two_sample}

Here, we briefly discuss a special case of the setting introduced in Section \ref{ps}: 
having only one comparison dataset, which corresponds to two-sample testing. 
Suppose we have two datasets,
the reference dataset
$X_1, \ldots, X_n \iidsim P$ and 
the comparison dataset
$X_{n+1}, \ldots, X_{n+m} \iidsim Q$, and aim to test the null hypothesis 
\begin{equation}\label{eqn:null}
    H_0 : P = Q.
\end{equation}
This is clearly a special case of the problem introduced in \Cref{ps}, where $K=1$, $P^{(1)}=Q$, and $n_1=m$.
In this two-sample problem, there is no need for multiple testing.
Instead,
the problem reduces to
constructing a test $\phi : \X^n \times \X^m \rightarrow \{0,1\}$ with finite-sample type I error control  
\[
\Ep{H_0}{\phi((X_1,\ldots,X_n),(X_{n+1}, \ldots, X_{n+m}))} = \Ep{X_1,\ldots,X_{n+m} \iidsim P}{\phi((X_1,\ldots,X_n),(X_{n+1}, \ldots, X_{n+m}))} \leq \alpha
\]  
for a predefined level $\alpha \in (0,1)$, under any distribution $P$. 
Well-known methods that achieve this goal include the permutation test~\citep{eden1933validity,fisher1935design,dwass1957modified} and the rank-sum test~\citet{mann1947test}. 
We briefly review these methods in Section \ref{2s-det}:
the permutation test can have a high computational cost or requires randomization to avoid this, 
while the rank-sum test is usually used with its asymptotic normal approximation.

\subsubsection{Two-sample testing with a batch conformal p-value}\label{sec:batch_conformal}

Here, we discuss an alternative method for constructing a $p$-value that is valid in a distribution-free sense, fast to compute for large sample sizes, and also non-randomized. 
With the score function $s : \X \rightarrow \R$
chosen such that 
larger scores are more likely to occur under $Q$  when $P\neq Q$,
define $S_i = s(X_i)$ for $i \in [n+m]$. Now, let $S_{(i)}$ denote the $i$-th order statistic of
the \emph{reference scores}
$S_1,\ldots,S_n$, and $S_{(\eta)}^\text{test}$ the $\eta$-th order statistic of
the \emph{comparison scores}
$S_{n+1}, \ldots, S_{n+m}$, for a predetermined $\eta \in [m]$. 
In this case, the batch conformal $p$-value 
for testing $P=Q$ from \eqref{eqn:p_val_k} becomes
\begin{equation}\label{eqn:p_val}
    p = \sum_{i=1}^n \frac{\binom{i+\eta-2}{\eta-1}\binom{n+m-i-\eta+1}{m-\eta}}{\binom{n+m}{m}} \cdot \One{S_{(\eta)}^\text{test} \leq S_{(i)}} + \frac{\binom{n+\eta-1}{\eta-1}}{\binom{n+m}{m}}.
\end{equation}
As a consequence of the validity of batch conformal $p$-values,
if the scores $S_1, \ldots, S_{n+m}$ are exchangeable, then the statistic $p$ in~\eqref{eqn:p_val} is a valid $p$-value for the null hypothesis $H_0$ in~\eqref{eqn:null}.


\begin{remark}
If the score distribution has point masses, 
we apply uniform tie-breaking to determine the order of scores. Another standard approach is to
add a small amount of noise to the score \citep{kuchibhotla2020exchangeability},
so that the resulting adjusted score has a nonatomic distribution. 
For example, one can 
construct the $p$-value using the scores $\tilde{S}_i = S_i + \epsilon_i$, where $\epsilon_i \iidsim \mathcal{N}(0,10^{-10})$.
While the validity of batch conformal $p$-values holds with the uniform tie-breaking strategy without introducing additional noise,  
the tie-breaking-with-noise strategy can be useful, 
particularly with multiple comparison groups---the main focus of this work---followed by multiple $p$-values reusing the same reference dataset, where ``uniform tie-breaking" may result in a lack of ``consistency''  across groups, thereby requiring the almost-sure distinctness condition in Theorem~\ref{thm:prds}.
\end{remark}

By the validity of batch conformal $p$-values, the test $\phi = \One{p \leq \alpha}$ controls the type I error at level $\alpha$, for any $\alpha \in (0,1)$. 
Intuitively, this test rejects the null if the $\eta$-sample quantile of the test scores is unusually large compared to the reference scores. In Appendix~\ref{sec:multi_quantile}, we present a generalized version of the $p$-value from~\eqref{eqn:p_val} that depends on multiple quantiles.

While our main focus is on the setting of multiple comparison groups, the batch conformal $p$-value 
can have benefits over well-known methods in some scenarios.
First, the goal of our method is to detect changes in the \emph{quantiles} of the distributions, and thus can be used to detect changes in the tails (e.g., the new treatment works better for the top 10\% patients with the highest blood pressure).
In contrast, the popular rank-sum test aims to detect changes in the "bulk" of the distribution.
This allows our method to detect a distinct and complementary kind of effects; see the illustration in Section \ref{ts-de}.

Moreover, our method also has an advantage compared to permutation tests. While one can use these with the same quantile test statistic as our method (and thus in principle they can detect the same effects), a key limitation is that permutation tests can be computationally heavy, requiring the re-computation of test statistics over many permutations.
While one can randomly sample permutations \citep{lehmann2022testing}, one may still need to sample a large number to get significant results (at least $\lceil 1/\alpha\rceil-1$ to have the possibility of a $p$-value less than or equal to $\alpha$).
This is especially severe in situations where there is \emph{multiplicity}, i.e., when multiple tests are performed, and where higher levels of significance are required due to multiplicity adjustments.
For instance, in genomics, permutation tests are performed often, but are known to be expensive \citep[see e.g.,][etc]{stranger2007population,salojarvi2017genome,maura2022efficient}.
Our methods could lead to significant computational savings in such situations.

\section{Simulations}

We present simulation results to illustrate the performance of the proposed procedure. 
We begin with a univariate setting with normal variables for illustration, 
followed by results in a more complex multivariate setting.\footnote{Code to reproduce the experiments in this section is available at \url{https://github.com/yhoon31/batch_conformal}.}

\subsection{Simple case: detecting shifts in a one-dimensional outcome}

We first illustrate the performance of Algorithm~\ref{alg:batch} in a simple setting with a one-dimensional outcome and no covariates. We generate the reference dataset of size $n=100$ from the $\mathcal{N}(0,3^2)$ distribution. Next, we generate multiple comparison datasets, with their sizes $(n_k)_{1 \leq k \leq K}$ drawn uniformly in advance between 30 and 50.
The simulation is run under two null proportions, 0.5 and 0.7, and three numbers of comparison groups: 20, 50, and 200. 
For the non-null groups, 
the data are drawn from $\mathcal{N}(\delta, 3^2)$, using signal strengths $\delta = 1, 2, 3$. The results are shown in Figure~\ref{fig:one_dim}, illustrating that the proposed procedure tightly controls the false discovery rate at the theoretically proven level, supporting the conclusion of \Cref{thm:prds}.

\begin{figure}[!htbp]\
    \centering
    \includegraphics[width=0.8\textwidth]{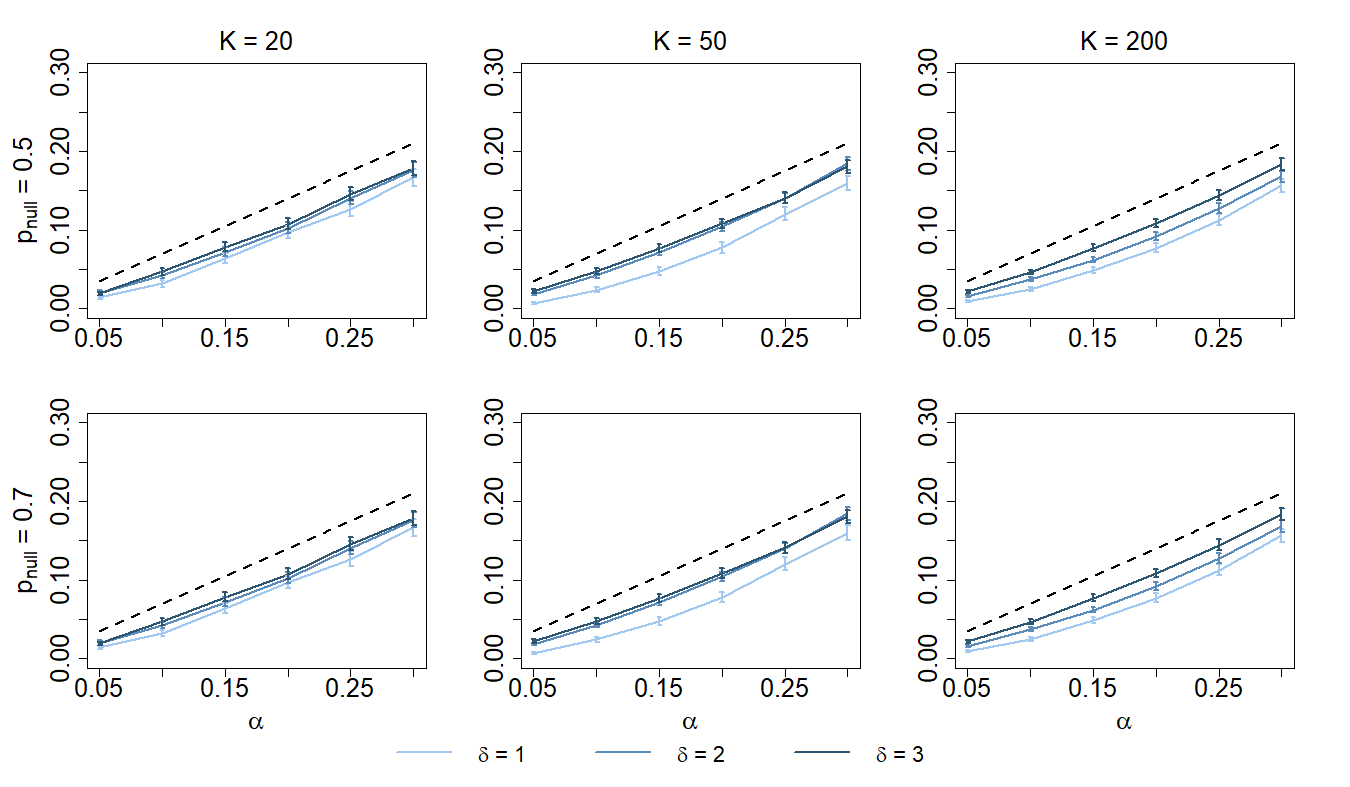}
    \caption{False discovery rate of Algorithm~\ref{alg:batch} across various signal strengths and numbers of comparison groups, under a true-null proportion of $p_\text{null} = 0.5$ and $p_\text{null} = 0.7$. The dotted line corresponds to the theoretical bound $p_\text{null} \cdot \alpha$ from \Cref{thm:prds}.}
    \label{fig:one_dim}
\end{figure}

Next, we compare the performance of Algorithm~\ref{alg:batch} with an ``oracle procedure" that applies the Benjamini-Hochberg procedure to $p$-values obtained from two-sample $z$-tests, given as
\[p_k = \Phi\left(
(\bar{X}_{\mathrm{ref}} - \bar{X}_k)/
\left(\sigma{\sqrt{\frac{1}{n} + \frac{1}{n_k}}}\right)\right),\quad k=1,2,\ldots,K,\]
where $\bar{X}_{\mathrm{ref}}$ and $\bar{X}_k$ represent the sample means of the reference dataset and the $k$-th comparison dataset, respectively, and $\Phi$ denotes the cumulative distribution function of a standard normal random variable. 
Here, the variance is assumed to be known, with $\sigma = 3$ in our examples. 
These $p$-values
satisfy positive regression dependence (see Appendix~\ref{sec:z_prds}), 
and thus the Benjamini-Hochberg procedure guarantees false discovery rate control. 
We generate the data as before under $\delta=1,2,3$ and $K=50$. The results are shown in Figure~\ref{fig:sim_z}, illustrating that 
\emph{our proposed procedure 
attains power  comparable to the oracle procedure}.

\begin{figure}[!htbp]
    \centering
    \includegraphics[width=0.8\textwidth]{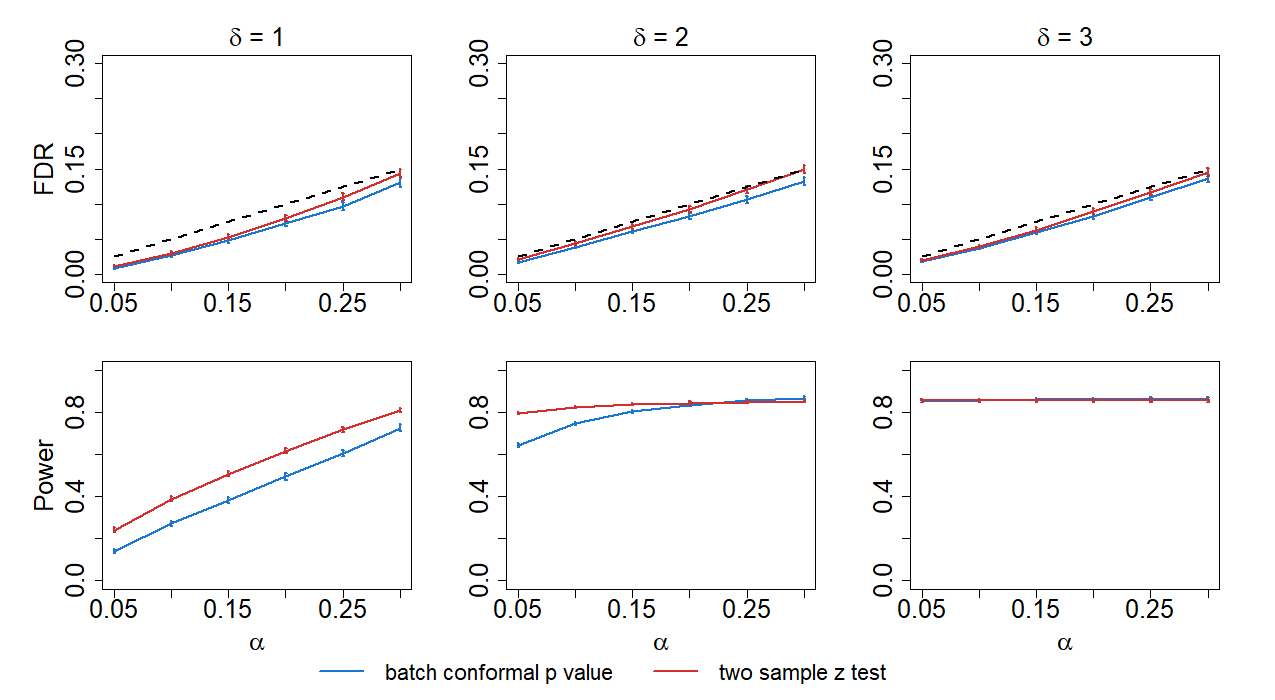}
    \caption{False discovery rate and power of Algorithm~\ref{alg:batch} and the procedure with the $p$-values from the two sample $z$-test, across various signal strengths, under a true-null proportion of $0.5$. The dotted line corresponds to the theoretical bound $0.5 \alpha$ from \Cref{thm:prds}.}
    \label{fig:sim_z}
\end{figure}

We also provide a result for the case where the true distribution is not normal, in which approaches based on the normal model assumption might not be successful. We repeat the same simulation with the following null and alternative distributions:
\[\textnormal{Null: } X \sim 0.5 \cdot \textnormal{Cauchy}(0,1) + 0.5 \cdot \textnormal{Unif}[-1,1], \quad \textnormal{Alternative: } X \sim \delta + 0.5 \cdot \textnormal{Cauchy}(0,1) + 0.5 \cdot \textnormal{Unif}[-1,1],\]
where we set $\delta = 1$. 
We compare the results 
of our Algorithm~\ref{alg:batch} with the results of the two-sample $t$-test $p$-values. Figure~\ref{fig:sim_non_normal} summarizes the results for true null proportions of $0.3$, $0.5$, and $0.7$. 
The false discovery rate of the procedure based on $z$-test $p$-values occasionally exceeds the bound $(\text{true null proportion}) \cdot \alpha$ but remains controlled at level $\alpha$.
However, its power is significantly lower than that of the distribution-free Algorithm~\ref{alg:batch}.

\begin{figure}[!htbp]
    \centering
    \includegraphics[width=0.8\textwidth]{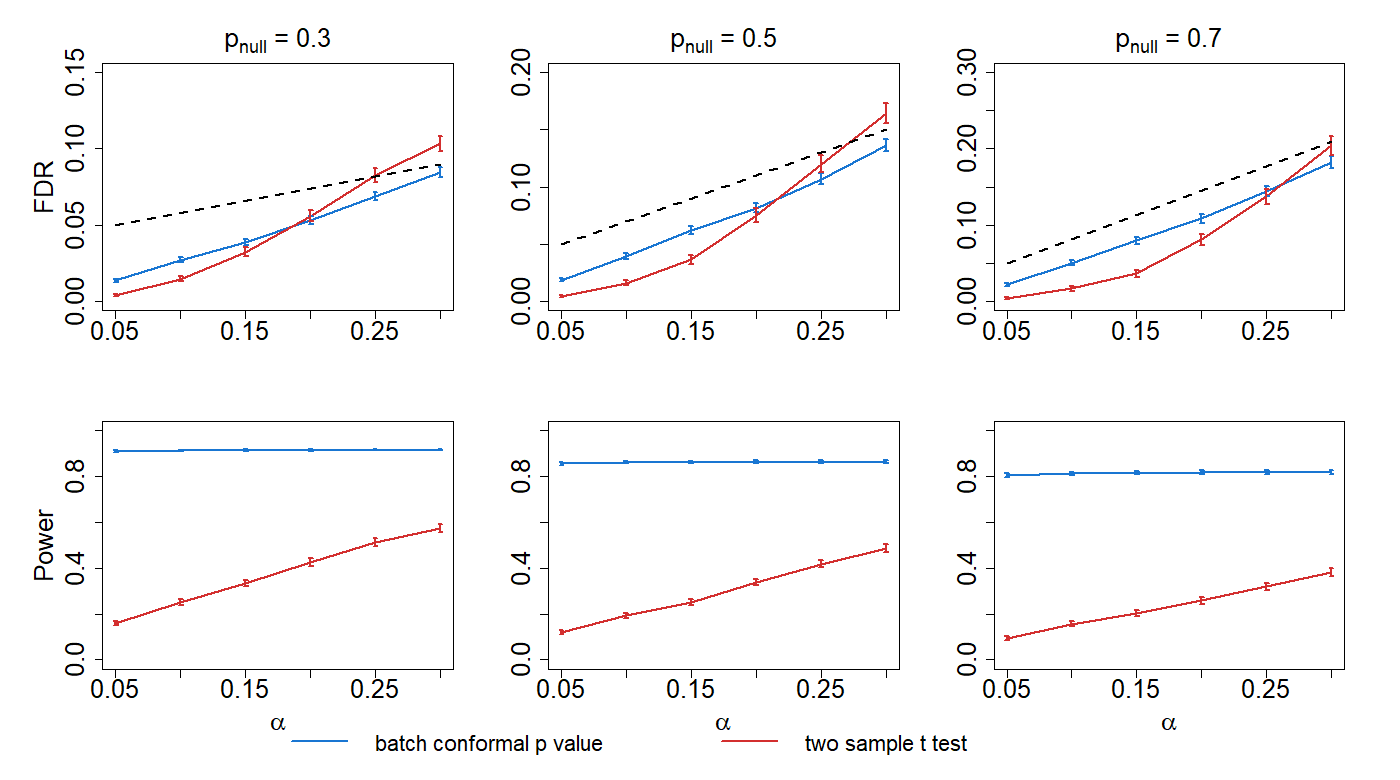}
    \caption{False discovery rate and power of Algorithm~\ref{alg:batch} and the procedure with the $p$-values from the two sample $t$-test, under a non-normal null distribution, across true-null proportions of $0.3, 0.5$, and $0.7$. The dotted line corresponds to the theoretical bound from \Cref{thm:prds}.}
    \label{fig:sim_non_normal}
\end{figure}

Next, we compare the performance of the proposed procedure to the method discussed in Section~\ref{sec:subsampling}, which uses the conformal $p$-value constructed by subsampling. Figure~\ref{fig:sim_subsampling} illustrates that both methods control the false discovery rate at the desired level, but the subsampling-based approach often leads to low power. Intuitively, this is likely because using only a single sample in the non-null case does not provide sufficient evidence to conclude that the sampling distribution differs from that of the reference set.

\begin{figure}[!htbp]
    \centering
    \includegraphics[width=0.8\textwidth]{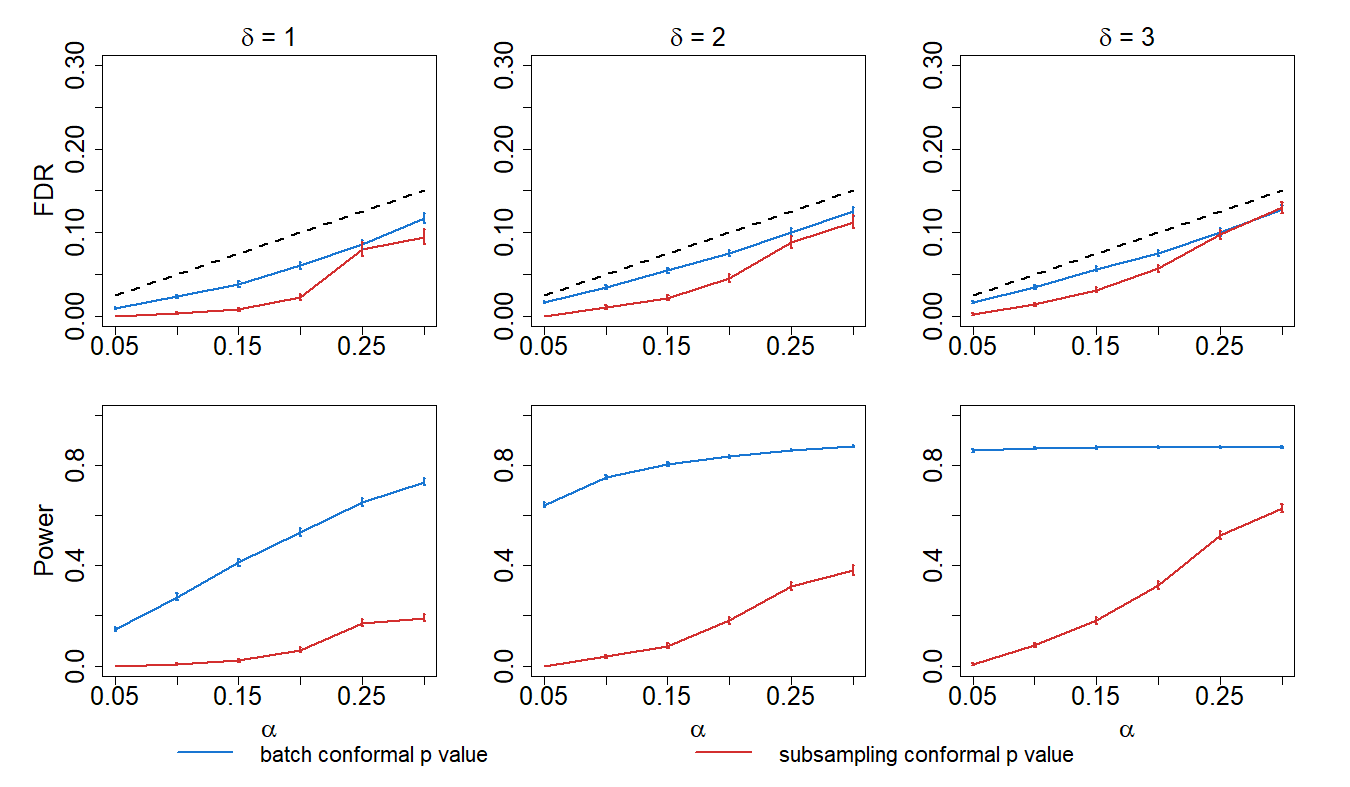}
    \caption{False discovery rate and power of the procedure with batch conformal $p$-values and the procedure with subsampling conformal $p$-values, across different signal strengths. The dotted line corresponds to the theoretical bound from \Cref{thm:prds}.}
    \label{fig:sim_subsampling}
\end{figure}

\subsection{Detecting shifts in a multivariate outcome}\label{sec:sim_mult}

Next, we perform experiments in a setting aimed at detecting shifts in a multivariate outcome. We generate a training set 
\((X_i^\text{train}, Y_i^\text{train})_{1 \leq i \leq 100} \subset \R^{10} \times \R^5\) 
and a calibration dataset 
\((X_i, Y_i)_{1 \leq i \leq 50} \subset \R^{10} \times \R^5\), 
of sizes 100 and 50, respectively, from the following distribution:
\begin{align*}
    &X \sim \textnormal{Unif}([0,1]^{10})\qquad
    (Y_1,Y_2,Y_3) \mid X \sim \mathcal{N}(\beta_1^\top X, \Sigma),\\
    &Y_4 \mid X, Y_1,Y_2,Y_3 \sim \text{Gamma}((Y_1^2+Y_2^2)/2,2),\qquad
    Y_5 \mid X, Y_1,Y_2,Y_3, Y_4 \sim \textnormal{Beta}(|Y_1|, |Y_2|+\tfrac{1}{2}|Y_3|).
\end{align*}
Here, 
each entry of $\beta_1 \in \R^{10 \times 3}$ is generated 
independently from a uniform distribution over $[0,1]$, 
and fixed in advance, while 
$\Sigma$ is a $3 \times 3$ matrix with entries
$\Sigma_{ij} = 2-|i-j|$ for all $i,j \in [3]$.
Next, we generate 50 comparison groups, with group sizes sampled in advance from the distribution 
$5 + \text{Poisson}(20)$. The datapoints in each group are drawn from one of five distributions, 
parametrized by $t \in \{0, 1, 2, 3, 4\}$ assigned to each group:
\begin{align*}
    &X \sim \textnormal{Unif}([0,1]^{10}), \qquad
    (Y_1, Y_2, Y_3) \mid X \sim \mathcal{N}(\beta_1^\top X + t \cdot |\beta_2^\top X|^2, \Sigma), \\
    &Y_4 \mid X, Y_1, Y_2, Y_3 \sim \text{Gamma}((t \cdot |\beta_3^\top X| + Y_1^2 + Y_2^2)/2,2), \qquad
    Y_5 \mid X, Y_1, Y_2, Y_3, Y_4 \sim \textnormal{Beta}(|Y_1|, |Y_2| + \tfrac{1}{2} |Y_3|).
\end{align*}
Here $t = 0$ corresponds to the null distribution. 
The true null proportion is set to 0.5; that is, 25 groups are drawn from the distribution with $t = 0$, 
while $t = 1, 2, 3, 4$ are assigned to ten, five, five, and five groups, respectively. 
We construct estimators $(\hat{\mu}_j)_{j \in [5]}$ of the conditional means of $(Y_j)_{j \in [5]}$ by fitting the training data with random forest regression. Let $\hat{S}$ be the sample covariance matrix of the resulting residual vector of $\hat{r}$ values on the training set.
We run our Algorithm~\ref{alg:batch} with three choices of scores:
\begin{enumerate}
    \item Score A: $s:(x,y) \mapsto \hat{r}(x,y)^\top \hat{S}^{-1} \hat{r}(x,y)$, where 
    \(\hat{r}:(x,y) \mapsto  (y_1 - \hat{\mu}_1(x), y_2 - \hat{\mu}_2(x),y_3 - \hat{\mu}_3(x),y_4 - \hat{\mu}_4(x),y_5 - \hat{\mu}_5(x)).\)
    \item Score B: $s:(x,y) \mapsto \hat{r}(x,y)^\top \hat{S}^{-1} \hat{r}(x,y)$, where 
    \[\hat{r}:(x,y) \mapsto  (y_1 - \hat{\mu}_1(x), y_2 - \hat{\mu}_2(x,y_1),y_3 - \hat{\mu}_3(x,y_1,y_2),y_4 - \hat{\mu}_4(x,y_1,y_2,y_3),y_5 - \hat{\mu}_5(x,y_1,y_2,y_3,y_4)).\]
    \item Score C: $s:(x,y) \mapsto \hat{r}(x,y)^\top \hat{S}^{-1} \hat{r}(x,y)$, where 
    \[\hat{r}:(x,y) \mapsto  (y_5 - \hat{\mu}_1(x), y_4 - \hat{\mu}_2(x,y_5),y_3 - \hat{\mu}_3(x,y_4,y_5),y_2 - \hat{\mu}_4(x,y_3,y_4,y_5),y_1 - \hat{\mu}_5(x,y_2,y_3,y_4,y_5)).\]
\end{enumerate}
Intuitively, score A does not capture the dependence structure between the components of the outcome, 
while scores B and C do.
However, score C uses a mis-specified regression model for the outcomes. 

We repeat the process of generating the reference and comparison datasets and running the procedure 500 times. The results are summarized in Figure~\ref{fig:mult}, illustrating that the proposed procedure controls the false discovery rate at the level provided by Theorem~\ref{thm:prds},
and that the power of the tests is similar for different choice of scores.

\begin{figure}[ht]
    \centering
    \includegraphics[width=0.8\textwidth]{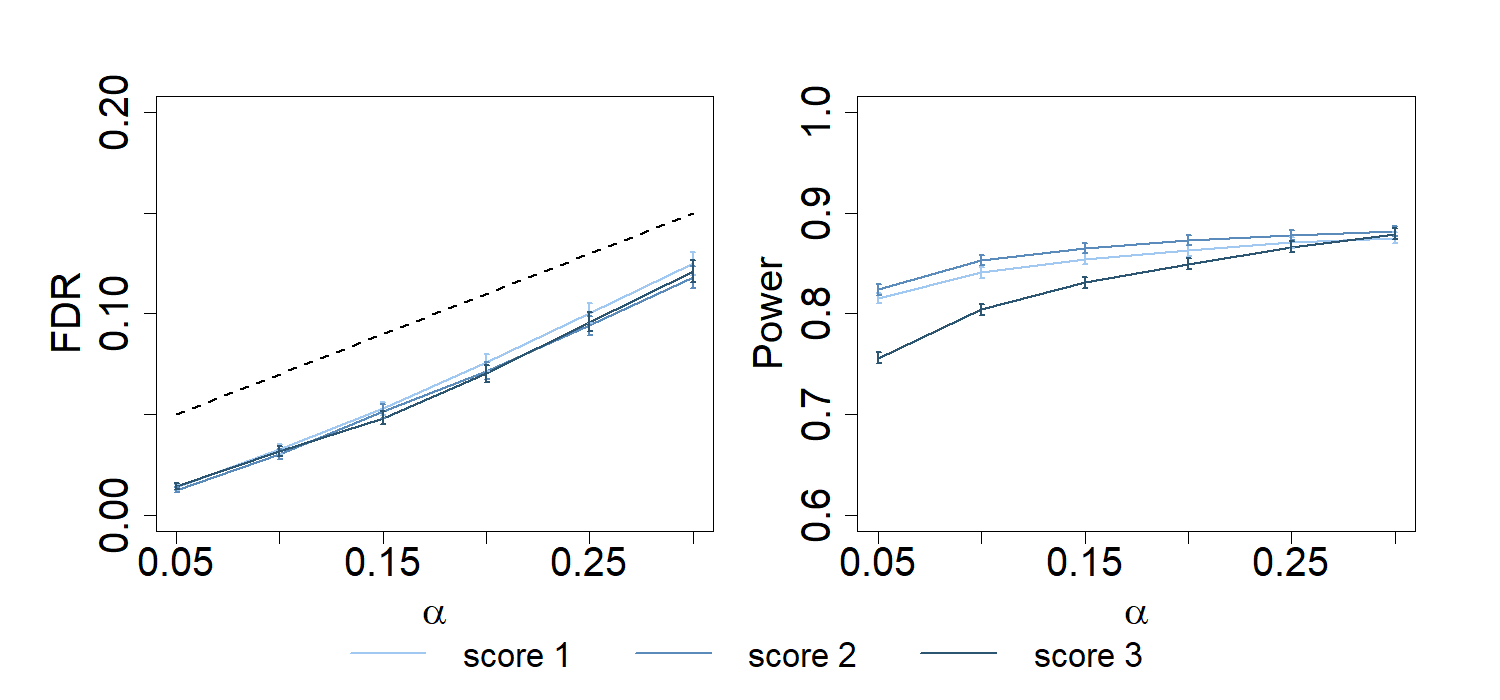}
    \caption{False discovery rate and power of Algorithm~\ref{alg:batch} across different scores. The dotted line corresponds to the theoretical bound $0.5 \alpha$ from \Cref{thm:prds}.}
    \label{fig:mult}
\end{figure}

\subsection{Two-sample test for distributional equality}
\label{ts-de}

Here, we provide simple simulation results for batch conformal p-value-based testing in the two-sample setting, and compare with other distribution-free two-sample tests---the permutation test and the rank-sum test.

We generate reference samples of size $n = 30$ from $\mathcal{N}(0,1)$ and comparison samples of size $m = 30$ from $\mathcal{N}(0,3)$. The two distributions thus differ in scale but not in location---i.e., they have the same mean and median. We then construct three p-values:
\begin{enumerate}
\item Batch conformal $p$-value: computed according to~\eqref{eqn:p_val}, with $\eta = \lceil q \cdot m \rceil$.
\item Permutation test $p$-value: computed according to~\eqref{eqn:p_perm_ran}, with test statistic $T(X_{1:n+m}) = Q_q(X_{1:n}) - Q_q(X_{(n+1):(n+m)})$ and $L = 1000$ permutations.
\item Rank-sum test $p$-value: computed with \texttt{wilcox.test} function.
\end{enumerate}
Here, $Q_q$ denotes the $q$-quantile function. We repeat these steps 500 times and compute the power of the three methods for two choices of $q$: $q = 0.8$ and $q = 0.5$. The former corresponds to comparing the 0.8-quantiles of the two samples, which makes the batch conformal test and the permutation test more likely to detect the difference. The latter represents a ``bad choice”, since the resulting test statistic—the median—is the same for both distributions, making it harder to distinguish between them. The rank-sum test does not involve such a hyperparameter and therefore remains the same in both comparisons.

The results are shown in Figure~\ref{fig:two_sample}. When a tail quantile ($q=0.8$) is used as the test statistic, both the batch conformal p-value-based test and the permutation test show significantly higher power than the rank-sum test---as expected, since the rank-sum test does not capture differences in the tails. Notably, the batch conformal p-value test even achieves higher power than the permutation test. When the median ($q=0.5$) is used, both methods have low power, but still achieve power higher than the rank-sum test. In summary, the test based on the batch conformal p-value has a significant advantage over the rank-sum test when the difference between distributions lies mainly in the tails, and attains power comparable to the permutation test---while being computationally much cheaper.

\begin{figure}[ht]
    \centering
    \includegraphics[width=0.6\textwidth]{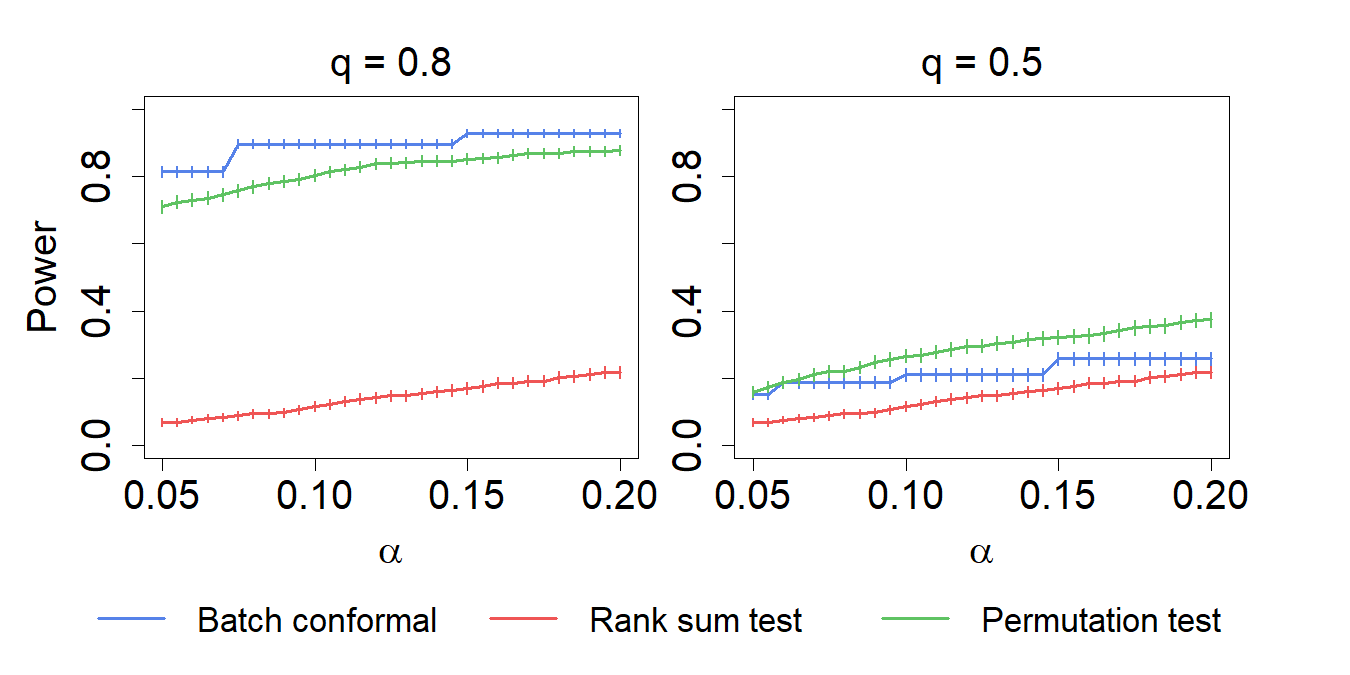}
    \caption{Power of the batch conformal p-value-based test, rank-sum test, and permutation test 
    with two choices of test statistic, along with error bars.}
    \label{fig:two_sample}
\end{figure}

\section{Empirical illustration}

We apply the proposed procedure\footnote{The experiments for the CPS data are provided in Appendix \ref{cps}.} to the HALT-C dataset~\citep{snow2024haltc}, which contains data from patients with chronic hepatitis C who were randomly assigned to 
either
receive peginterferon treatment or not.
The data were collected from ten study sites. We consider the following two tasks:
\begin{enumerate}
    \item Comparison of treatment effects across different patient groups.
    \item Comparison of patient-level baseline characteristics across ten study sites.
\end{enumerate}

In the first task, rather than comparing distributions with observed datapoints, we aim for the more challenging task of comparing within-group treatment effects without direct access to the datapoints, as counterfactuals are unavailable. Details are provided next.

\subsection{Identifying patient groups with larger treatment effects}
We first investigate the effect of the treatment on platelet counts measured after nine months. After removing data points lacking treatment or platelet information, the sample size is 844.

We partition the sample based on age, BMI, and sex. Age is divided into four groups: ``$<40$", ``$40$--$49$", ``$50$--$59$", and ``$\geq60$". Similarly, BMI is categorized into four groups: ``$<25$", ``$25$--$29.9$", ``$30$--$34.9$", and ``$>35$". As a result, the total number of groups formed by these three categorical variables is $4 \cdot 4 \cdot 2 = 32$. Among these groups, we exclude any group with a sample size that is too small---specifically, for which either the control arm or the treated arm has a size less than or equal to one. Out of the remaining 26 groups, we select the group “age 40–49 / BMI 25–29.9 / Male”, which has the largest sample size, as the reference group. The other $K=25$ groups are used as comparison groups. 

Throughout this section, we denote the counterfactual distributions of the comparison groups by $P_{k,0}$ and $P_{k,1}$ for $k \in [K]$, and those of the reference group by $P_{0,0}$ and $P_{0,1}$.
Further, 
we denote the cumulative distribution function of $P_{k,0}$ by $F_{k,0}$ for each $k \in \{0\} \cup [K]$. Let $Y_{k1}, \dots, Y_{k n_k}$ represent the outcomes of the treated individuals in the $k$-th group, and $Y_1, \dots, Y_n$ denote the outcomes in the reference group. 

Assuming for now that the cumulative distribution functions are known, 
construct the scores as $S_i^{(k)} = F_{k,0}(Y_{ki})$ for $i \in [n_k]$ in the comparison groups, and $S_i = F_{0,0}(Y_i)$ for $i \in [n]$ in the reference group. 
The detection procedure with these cumulative distribution function-based scores 
tests
\begin{equation}\label{eqn:null_haltc_cdf}
H_k : F_{k,0}(Y_k) \stackrel{d}{=} F_{0,0}(Y), \text{ where } Y_k \sim P_{k,1} \text{ and } Y \sim P_{0,1}, \quad k \in [K].
\end{equation}
The test based on the batch conformal 
$p$-value rejects the null when 
an appropriate quantile of 
the comparison scores 
$(F_{k,0}(Y_{k,i}))_{i \in [n_k]}$ 
is larger than the corresponding quantile of the reference scores $(F_{0,0}(Y_i))_{i \in [n]}$. 
When the potential outcomes follow continuous distributions, the null hypothesis $H_k$ in~\eqref{eqn:null_haltc_cdf} 
is equivalent to 
\(F_{k,0}\circ ({F_{k,1}}^{-1}) =
F_{0,0} \circ ({F_{0,1}}^{-1})\),
where $\cdot^{-1}$ denotes the inverse function.
Here, 
$F_{k,0}\circ ({F_{k,1}}^{-1})$
is the treated-to-control quantile-quantile map in the $k$-th group, which is also known as the (inverse of the)
\emph{response map} \cite{doksum1974empirical,doksum1976plotting}.
This map has been used to quantify treatment effects in e.g., \cite{athey2023semi}.

In practice, the cumulative distribution functions must be estimated\footnote{
It is also possible to use an external database with a much larger sample size to construct more accurate estimates of the cumulative distribution functions, since only the control group cumulative distribution function is required---e.g., the distribution of standard platelet counts in untreated individuals in our example.} using data from the control arm---e.g., by the empirical cumulative distribution function. 
For each $k \in \{0\}\cup [K]$,
denote 
by $Y_{k1}^{(0)}, \dots, Y_{k n_k^{(0)}}^{(0)}$ 
the outcomes of the control individuals in the $k$-th group, and $Y^{(0)}_1, \dots, Y^{(0)}_{n^{(0)}}$ denote the outcomes in the control arm of the reference group.
For each $k \in \{0\} \cup [K]$,
let the empirical cumulative distribution functions of the outcomes
in the control arm
of group $k$ be $\hat{F}_{k,0}$. 
We can consider 
the following null hypotheses
\emph{conditional on the control arms}---so that the empirical cumulative distribution functions can be treated as fixed:
\begin{equation}\label{eqn:null_empirical_cdf}
   \tilde{H}_k : \hat{F}_{k,0}(Y_k) \stackrel{d}{=} \hat{F}_{0,0}(Y)\, \mid 
   Y_{k1}^{(0)}, \dots, Y_{k n_k^{(0)}}^{(0)},\,
   Y^{(0)}_1, \dots, Y^{(0)}_{n^{(0)}}
   \text{ where } Y_k \sim P_{k,1} \text{ and } Y \sim P_{0,1}, \quad k \in [K].
\end{equation}

Our procedure provides exact 
false discovery rate control for testing the hypotheses~\eqref{eqn:null_empirical_cdf}.
While the
hypotheses in~\eqref{eqn:null_empirical_cdf}
differ from those in~\eqref{eqn:null_haltc_cdf}, 
they nonetheless
capture the same goal---detecting groups with larger treatment effects. 
The formulation in terms of hypotheses~\eqref{eqn:null_empirical_cdf} can be viewed as the finite-population counterpart of the formulation based on~\eqref{eqn:null_haltc_cdf}.

\begin{table}[ht]
\centering
\begin{tabular}{lll|ccc|ccc|ccc}
\specialrule{0.6pt}{0pt}{0pt}
\toprule
& & & \multicolumn{3}{c|}{$Q_1$} & \multicolumn{3}{c|}{$Q_2$} & \multicolumn{3}{c}{$Q_3$} \\
\midrule
Age & BMI & Sex
& 0.05 & 0.1 & 0.2
& 0.05 & 0.1 & 0.2
& 0.05 & 0.1 & 0.2 \\
\midrule
\midrule
$<40$       & $\geq 35$       & M & -- & -- & -- & -- & -- & \checkmark & -- & \checkmark & \checkmark \\
$40$--$49$  & $<25$           & M & -- & -- & -- & -- & -- & -- & -- & \checkmark & \checkmark \\
$40$--$49$  & $25$--$29.9$    & F & -- & -- & -- & -- & -- & -- & \checkmark & \checkmark & \checkmark \\
$40$--$49$  & $30$--$34.9$    & M & -- & \checkmark & \checkmark & -- & -- & -- & -- & -- & -- \\
$40$--$49$  & $30$--$34.9$    & F & -- & -- & \checkmark & -- & -- & -- & -- & -- & -- \\
$40$--$49$  & $\geq 35$       & M & -- & \checkmark & \checkmark & -- & -- & \checkmark & -- & -- & \checkmark \\
$50$--$59$  & $<25$           & F & -- & -- & -- & -- & -- & \checkmark & \checkmark & \checkmark & \checkmark \\
$50$--$59$  & $25$--$29.9$    & M & -- & -- & -- & \checkmark & \checkmark & \checkmark & \checkmark & \checkmark & \checkmark \\
$50$--$59$  & $25$--$29.9$    & F & -- & \checkmark & \checkmark & -- & -- & \checkmark & -- & -- & \checkmark \\
$50$--$59$  & $30$--$34.9$    & M & -- & \checkmark & \checkmark & -- & \checkmark & \checkmark & \checkmark & \checkmark & \checkmark \\
$50$--$59$  & $30$--$34.9$    & F & -- & -- & \checkmark & -- & \checkmark & \checkmark & -- & -- & -- \\
$50$--$59$  & $\geq 35$       & M & -- & -- & -- & -- & -- & \checkmark & -- & \checkmark & \checkmark \\
$\geq 60$   & $30$--$34.9$    & M & \checkmark & \checkmark & \checkmark & \checkmark & \checkmark & \checkmark & \checkmark & \checkmark & \checkmark \\
$\geq 60$   & $30$--$34.9$    & F & -- & \checkmark & \checkmark & -- & -- & -- & -- & -- & -- \\
$\geq 60$   & $\geq 35$       & F & -- & -- & -- & \checkmark & \checkmark & \checkmark & \checkmark & \checkmark & \checkmark \\
\specialrule{0.6pt}{0pt}{0pt}
\bottomrule
\end{tabular}
\caption{Results for the HALT-C dataset: selected groups at levels $\alpha = 0.05, 0.1$, and $0.2$, out of 25 groups based on age, BMI, and sex.  The reference group corresponds to age 40–49/BMI 25–29.9/Male.}
\label{tab:haltc_cdf}
\end{table}

Table~\ref{tab:haltc_cdf} shows the results from the method using cumulative distribution function-based scores, where, as in the previous experiment, the two quartiles and the median are used as test statistics. The number of detected groups is larger for higher quantiles, and especially for $Q_3$, suggesting that there are larger differences in treatment effects between the comparison and reference groups at larger quantiles of the treated outcomes.

\subsection{Identifying study sites with different population characteristics}

Next, we compare the population composition across the ten study sites in terms of patients' baseline characteristics. We conduct the test based on age, BMI, sex, and baseline platelet counts of the individuals---distinct from the ``platelet counts after treatment''  used in the previous experiment. The study site with the largest sample size (284) is chosen as the reference group, and the remaining nine sites are set as the comparison groups. In other words, denoting the joint distribution of age, BMI, sex, and platelet counts for the $k$-th comparison site as $P^{(k)}$ and for the reference site as $P$, we test
\[H_k : P^{(k)} = P, \qquad k=1,2,\cdots,9.\]

To construct the score function, we split the reference data into two sets, each of size 142, and use one split to compute the estimates $\hat{\mu} = (\hat{\mu}_\text{age}, \hat{\mu}_\text{BMI}, \hat{\mu}_\text{sex}, \hat{\mu}_\text{pl})$, which are simply the sample means of age, BMI, sex (encoded as a binary variable), and platelet counts, respectively. We also compute the sample covariance matrix $\hat{S}$ on the same split, and define the score function as
\[s(x) = s((x_\text{age}, x_\text{BMI}, x_\text{sex}, x_\text{pl})) = (x - \hat{\mu})^\top \hat{S}^{-1} (x - \hat{\mu}).\]
The second split is used to compute the $p$-values along with the comparison datasets.

Figure~\ref{fig:haltc_sites} presents the number of rejections from Algorithm~\ref{alg:batch} at levels $\alpha = 0.01, 0.02, \dots, 1$, with three different choices of the test statistic: the lower and upper quantiles, and the median. Specifically, for each group, we construct the $p$-value using $\eta_k = \lceil 0.25 \cdot n_k \rceil$, $\lceil 0.5 \cdot n_k \rceil$, or $\lfloor 0.75 \cdot n_k \rfloor$ in the formula~\eqref{eqn:p_val_k}.
None of the comparison sites is concluded to have a different distribution from the reference site unless the target false discovery rate control level $\alpha$ is set to a large value, which is not typically used in multiple testing procedures.
Thus, no significant differences 
have been detected
in the population composition across the ten sites, 
providing one component of justification for using a site-pooled sample in various analyses.

\begin{figure}[ht]
    \centering
    \includegraphics[width=0.85\textwidth]{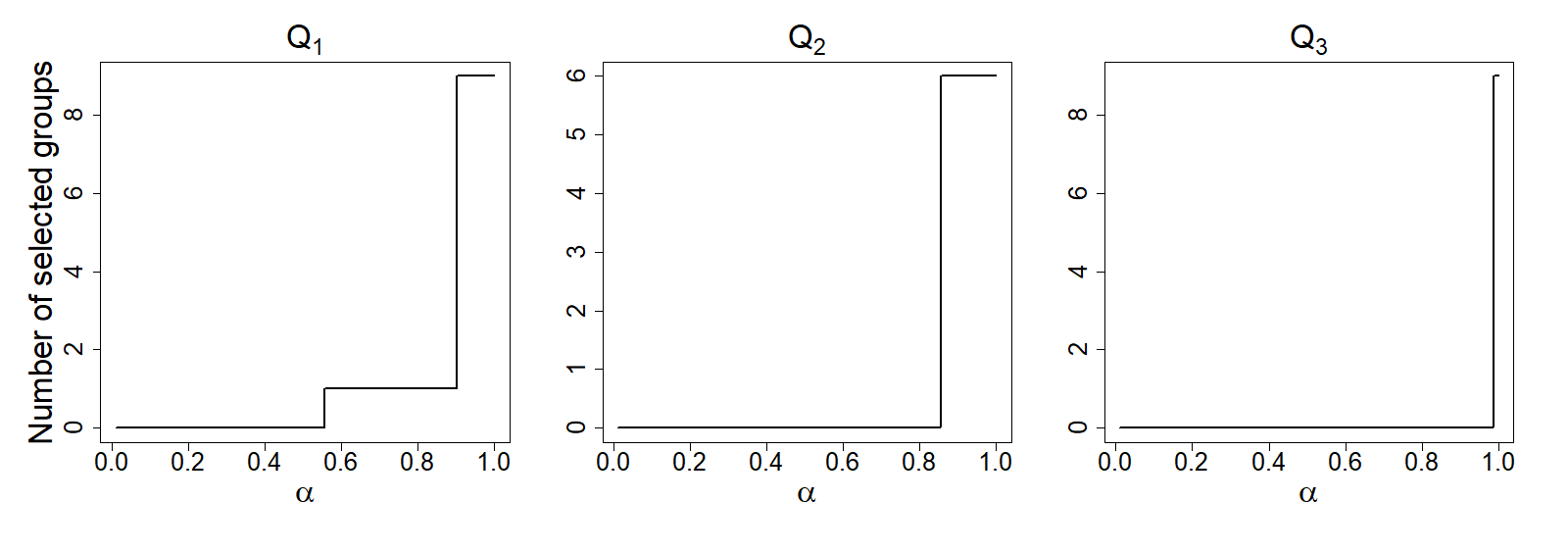}
    \caption{Results for the HALT-C dataset: the number of selected study sites from Algorithm~\ref{alg:batch}, across different values of the level $\alpha$.}
    \label{fig:haltc_sites}
\end{figure}

\section{Discussion}

In this work, we explore the problem of identifying distributional shifts across multiple comparison datasets and provide a methodology that offers distribution-free false discovery rate control while being computationally lightweight. 
The proposed methodology allows the repeated use of the reference dataset, by making use of positive regression dependence of the batch conformal $p$-values. Empirical results illustrate that this approach indeed controls the false discovery rate with reasonable power.

Several open questions remain regarding the application of batch conformal $p$-values.
For instance, 
it can be of interest to test for pairs of groups, beyond comparing every dataset with one reference dataset \citep[e.g.,][etc]{dunnett1955multiple}. 
Is there a way to re-use the data in each group to construct $p$-values, and can we construct a multiple testing procedure that outputs a consistent decision---in the sense that the individual decisions do not contradict each other? 
This question is left for future exploration.

\section*{Acknowledgement}
This work was supported in part by 
NIH R01-AG065276, R01-GM139926, NSF 2210662, P01-AG041710, R01-CA222147, as well as the ARO, ONR, and the Sloan Foundation.

\bibliographystyle{plainnat}
\bibliography{bib}

\appendix
\section{Details for two-sample testing}
\label{2s-det}

\subsection{Permutation test} 

Permutation tests are a tool for assessing whether two distributions are equal or, more generally, whether two datasets are exchangeable. Given a test statistic $T : \X^{n+m} \rightarrow \R$, the $p$-value of the permutation test is 
\begin{equation}\label{eqn:p_perm}
    p = \frac{\sum_{\sigma \in \mathcal{S}_{n+m}} \One{T(X_{\sigma(1)}, X_{\sigma(2)}, \ldots, X_{\sigma(n+m)}) \geq T(X_1,X_2,\ldots,X_{n+m})}}{(n+m)!}.
\end{equation}
The test statistic $T$ can be any function---popular choices include the mean difference $T(x_1,\ldots,x_{n+m}) = |\frac{1}{m}\sum_{j=1}^m x_{n+j} - \frac{1}{n}\sum_{i=1}^n x_i|$
and the quantile difference $T(x_1,\ldots,x_{n+m}) = |Q_\tau(\{x_{n+1}, \ldots, x_{n+m}\})$ $ - Q_\tau(\{x_1, $ $\ldots, x_n\})$.

In practice, directly computing the $p$-value based on all $(n+m)!$ permutations is often computationally infeasible. Therefore, the following randomized version of the $p$-value is commonly used \citep{dwass1957modified,lehmann2022testing}:
\begin{equation}\label{eqn:p_perm_ran}
    p = \frac{1 + \sum_{l=1}^L \One{T(X_{\sigma_l(1)}, X_{\sigma_l(2)}, \ldots, X_{\sigma_l(n+m)}) \geq T(X_1,X_2,\ldots,X_{n+m})}}{L+1},
\end{equation}
where $\{\sigma_1,\ldots, \sigma_L\}$ is a random sample of permutations drawn uniformly from $\mathcal{S}_{n+m}$. Both the statistics~\eqref{eqn:p_perm} and~\eqref{eqn:p_perm_ran} are valid in a distribution-free sense---i.e., under any distribution $P$, the inequality $\PP{p \leq \alpha} \leq \alpha$ holds for any $\alpha \in (0, 1)$ under the null hypothesis $H_0$ in~\eqref{eqn:null}, see, for example, \citet{lehmann2012parametric}.

\subsection{Rank-sum test and other methods}

If $\mathcal{X}\subset \R$,
the rank-sum test~\citep{mann1947test} uses a rank-based test statistic, 
\(U = nm + n(n+1)/{2} - R\),
where $R$ is the sum of ranks of the test observations $X_{n+1}, \dots, X_{n+m}$ within the set of all observations ${X_1, \dots, X_{n+m}}$. If $\X$ is not a fully ordered set, each observation can first be mapped to a real number before applying the procedure. For small sample sizes $n$ and $m$, the cumulative distribution function $F_U$ of $U$ can be computed directly to obtain a $p$-value, $F_U(U)$, with exact finite-sample validity. However, for large sample sizes, a normal approximation with an asymptotic guarantee is often used \citep{lehmann2006nonparametrics}.
Extensions include \citet{rosenbaum2005exact} and \citet{biswas2014distribution}, among many others.

\section{Testing based on multiple quantiles}\label{sec:multi_quantile}

The testing procedure based on the batch conformal $p$-value in~\eqref{eqn:p_val} rejects the null hypothesis when a specified quantile of the test scores is unusually large compared to the reference scores.
Consequently, if the distribution of the median does not change strongly under the alternative---e.g., it mainly changes in the tails---then the test's power might be low. 
If it is unknown where the shift is likely to occur, 
one might consider using multiple quantiles, such as the three quartiles, to capture the shift in an unknown region.
Can we construct a valid $p$-value based on multiple quantiles of the test scores? We provide a positive answer next.

For clarity and intuition, we present a methodology for testing with two quantiles---generalization beyond two is straightforward. 
Given $0<\eta_1 < \eta_2<1$,
suppose we aim to reject the null if either $S_{(\eta_1)}^\text{test}$ or $S_{(\eta_2)}^\text{test}$ is large compared to the reference scores. 
Let $\tilde{\eta}_1 = \round(\eta_1 \cdot \frac{n}{m})$ and $\tilde{\eta}_2 = \round(\eta_2 \cdot \frac{n}{m})$ be the corresponding scaled ranks in the reference dataset, where $\round(\cdot)$ rounds real numbers to the nearest integer, with $\round(j+1/2)=j$ for any integer $j$. 
We will design a test that rejects the null if
either 
(1) $S_{(\eta_1)}^\text{test}$ is large compared to $S_{(\tilde{\eta}_1)}$ or 
(2) $S_{(\eta_2)}^\text{test}$ is large compared to $S_{(\tilde{\eta}_2)}$.

We can construct the $p$-value as
\begin{equation}\label{eqn:p_two_quantiles}
\begin{split}
    &p = \sum_{t = -\tilde{\eta}_1+1}^{n-\tilde{\eta}_2+1} w_t \cdot \One{S_{(\eta_1)}^\text{test} \leq S_{(\tilde{\eta}_1+t)}, S_{(\eta_2)}^\text{test} \leq S_{(\tilde{\eta}_2+t)}}, \textnormal{ where }\\
    &w_t = \sum_{\substack{t_1, t_2 \,:\, \max\{t_1,t_2\} = t \\ 1 \leq \eta_1+\tilde{\eta}_1+t_1 + 1 < \eta_2+\tilde{\eta}_2+t_2 + 1 \leq n+m}} \frac{\binom{\eta_1+\tilde{\eta}_1+t_1-2}{\eta_1-1}\binom{\eta_2+\tilde{\eta}_2+t_2-\eta_1+\tilde{\eta}_1+t_1-1}{\eta_2-\eta_1-1}\binom{n+m-\eta_2-\tilde{\eta}_2-t_2+1}{m-\eta_2}}{\binom{n+m}{m}}.
\end{split}
\end{equation}
Here, we define $S_{(n+1)} = +\infty$. Observe that the $p$ above tends to become small when either $S_{(\eta_1)}^\text{test}$ or $S_{(\eta_2)}^\text{test}$ is large.
The following result shows the validity of this method.

\begin{theorem}\label{thm:two_quantiles}
    The statistic $p$ in~\eqref{eqn:p_two_quantiles} is a valid $p$-value for the null hypothesis $H_0$ in~\eqref{eqn:null}.
\end{theorem}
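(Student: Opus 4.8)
The plan is to show that, under the null $H_0$ in~\eqref{eqn:null}, the statistic $p$ in~\eqref{eqn:p_two_quantiles} is super-uniform, i.e.\ $\mathbb{P}_{H_0}(p\le\alpha)\le\alpha$ for every $\alpha\in(0,1)$, which is the defining property of a valid $p$-value. The whole argument rests on one distributional fact plus a combinatorial identity. First I would reduce to the rank distribution: under $H_0$ the $n+m$ scores are i.i.d.\ and hence exchangeable, and after tie-breaking they are almost surely distinct, so the set of \emph{combined ranks} of the $m$ comparison scores among all $n+m$ scores is a uniformly random size-$m$ subset of $[n+m]$. Since everything in~\eqref{eqn:p_two_quantiles} is a deterministic function of this random subset, it suffices to analyze $p$ under this uniform law. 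Writing $R_1,R_2$ for the combined ranks of $S_{(\eta_1)}^{\text{test}}$ and $S_{(\eta_2)}^{\text{test}}$, the pair $(R_1,R_2)$ is exactly the $\eta_1$-th and $\eta_2$-th order statistics of a uniform random $m$-subset, whose joint p.m.f.\ is the classical finite-population formula
\[\mathbb{P}(R_1=a,\,R_2=b)=\frac{\binom{a-1}{\eta_1-1}\binom{b-a-1}{\eta_2-\eta_1-1}\binom{n+m-b}{m-\eta_2}}{\binom{n+m}{m}},\qquad \eta_1\le a<b\le n+\eta_2,\]
see \citep{wilks1962mathematical}, which is precisely the product of three binomials appearing in the weights $w_t$.

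Next I would rewrite $p$ as a tail probability. I reparametrize by the \emph{exceedance} variables $t_1=R_1-\eta_1-\tilde\eta_1+1$ and $t_2=R_2-\eta_2-\tilde\eta_2+1$, which record how far each test quantile's rank sits above its reference benchmark $\tilde\eta_j$. A short check shows that the event $\{S_{(\eta_j)}^{\text{test}}\le S_{(\tilde\eta_j+t)}\}$ is equivalent to $\{t_j\le t\}$, where the convention $S_{(n+1)}=+\infty$ handles the boundary $\tilde\eta_j+t=n+1$; hence the joint indicator in~\eqref{eqn:p_two_quantiles} equals $\One{\max\{t_1,t_2\}\le t}$. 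Substituting $a,b$ in terms of $t_1,t_2$ into the joint p.m.f.\ shows that $w_t$ is obtained by summing that p.m.f.\ over the level set $\{\max\{t_1,t_2\}=t\}$, i.e.\ $w_t=\mathbb{P}_{H_0}(\max\{t_1,t_2\}=t)$. Consequently, with $M=\max\{t_1,t_2\}$ and its observed value $M^{\mathrm{obs}}$,
\[p=\sum_{t} w_t\,\One{M^{\mathrm{obs}}\le t}=\sum_{t\ge M^{\mathrm{obs}}} w_t=\mathbb{P}_{H_0}\!\left(M\ge M^{\mathrm{obs}}\right),\]
the upper-tail probability of the statistic $M$ evaluated at its own observed value.

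To conclude, note that under $H_0$ the observed value $M^{\mathrm{obs}}$ has exactly the null law $G$ of $M$ (both are read off the same uniform random subset), so $p$ equals the survival function of $M$ evaluated at a draw from $G$. By the standard fact that $g(W):=\mathbb{P}(W'\ge W)$ is super-uniform whenever $W'\overset{d}{=}W$, this gives $\mathbb{P}_{H_0}(p\le\alpha)\le\alpha$ for all $\alpha$, as claimed. The extension to more than two quantiles is identical: one replaces the three-binomial p.m.f.\ by the analogous product over the chosen order statistics and takes $M$ to be the maximal exceedance across all of them.

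The main obstacle is the second step: the combinatorial bookkeeping required to verify that the weights $w_t$ exactly reassemble the null p.m.f.\ of $M$ over its full support. This demands care with the off-by-one indexing relating $(R_1,R_2)$ to $(t_1,t_2)$, the ordering constraint $R_1<R_2$, the rounding defining $\tilde\eta_1,\tilde\eta_2$ (which only shifts the centering and never affects validity), and especially the boundary contributions captured by the convention $S_{(n+1)}=+\infty$; these boundary terms are exactly what ensure $\sum_t w_t=1$, and hence that $p$ is never spuriously zero. Once the identity $w_t=\mathbb{P}_{H_0}(M=t)$ is pinned down, validity follows immediately from the survival-function argument.
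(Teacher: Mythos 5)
Your proposal is correct and follows essentially the same route as the paper's proof: reduce to the uniform distribution of the comparison-score ranks (Lemma~\ref{lem:ranks_1}), use the equivalence~\eqref{eqn:s_r} to rewrite the joint indicator as $\One{T\le t}$ for the max-exceedance statistic $T=\max\{R_{\eta_1}-\eta_1-\tilde\eta_1+1,\,R_{\eta_2}-\eta_2-\tilde\eta_2+1\}$, verify via the three-binomial joint p.m.f.\ of $(R_{\eta_1},R_{\eta_2})$ that $w_t=\PP{T=t}$, and conclude by super-uniformity of a tail probability of $T$ evaluated at its own observed value. If anything, your phrasing of the last step as the \emph{survival} function $\PP{T\ge T^{\mathrm{obs}}}$ is slightly cleaner than the paper's, which labels the same quantity with CDF notation.
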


The proof is presented in \Cref{pfthm:two_quantiles}.

\section{Additional experimental results}

\subsection{Comparison with permutation-test-based method}

Here, we repeat the simulation in Section~\ref{sec:sim_mult} with additional baseline methods. We fix the score to score A and compare the following three methods:
\begin{enumerate}
\item the Benjamini–Hochberg procedure applied to the batch conformal $p$-values (proposed procedure),
\item the Benjamini–Hochberg procedure applied to the permutation test $p$-values (no theoretical guarantee), and
\item the Benjamini–Yekutieli procedure applied to the permutation test $p$-values.
\end{enumerate}
The results are shown in Figure~\ref{fig:mult_perm}. As discussed in Section~\ref{sec:BY}, the Benjamini–Yekutieli procedure provides conservative detection with low power. The Benjamini–Hochberg procedure applied to the permutation test $p$-values shows results comparable to those of the proposed procedure---although it has no theoretical guarantee and is computationally much heavier. In this setting, the proposed procedure can be viewed as a `correction' of this method that enables provable false discovery rate control while being computationally cheaper.

\begin{figure}[ht]
    \centering
    \includegraphics[width=0.8\textwidth]{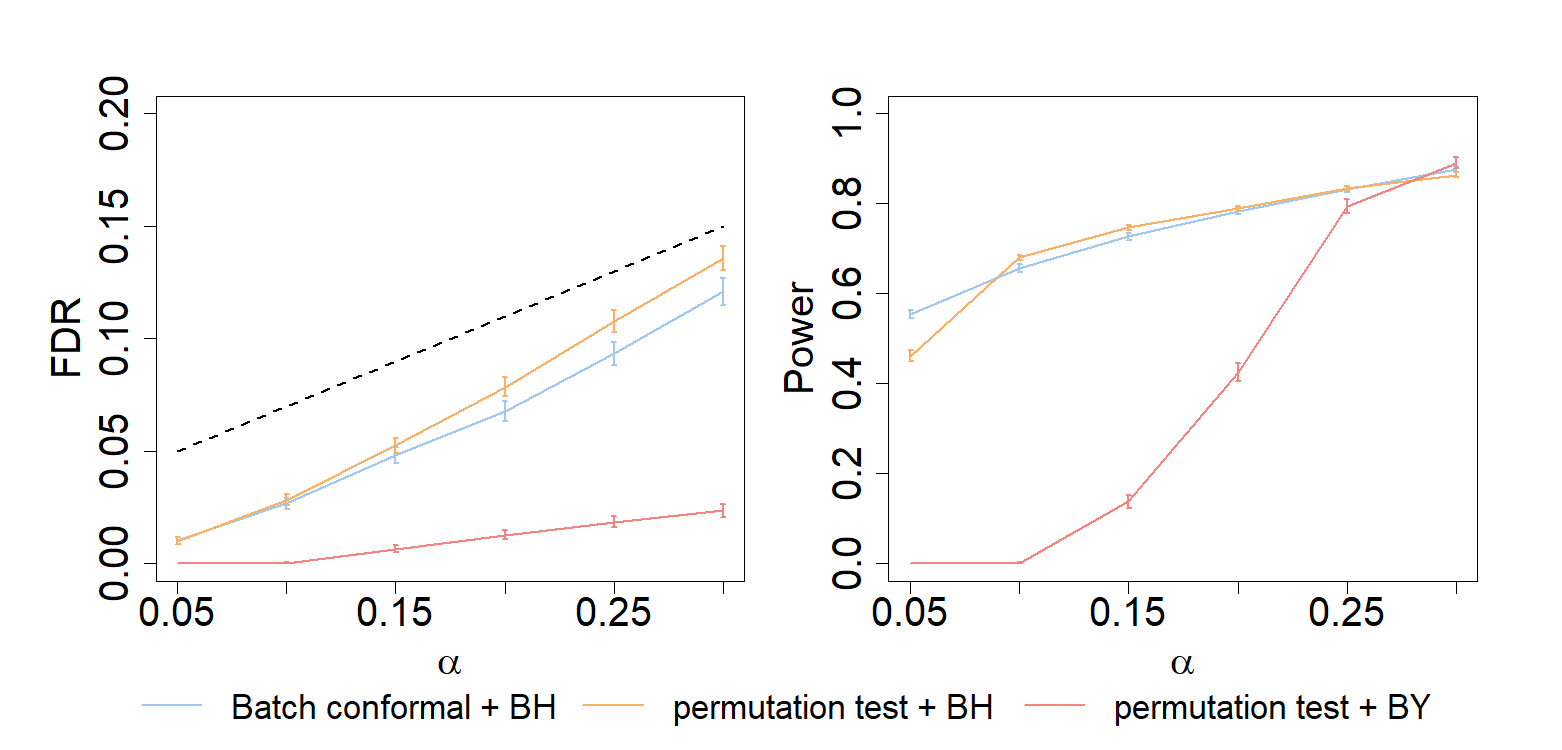}
    \caption{False discovery rate and power of the proposed procedure and the permutation-test-based methods.}
    \label{fig:mult_perm}
\end{figure}

\subsection{Two sample test under shift in the tail}

We provide additional experimental results that supplement the experiment in Section~\ref{ts-de}, to illustrate the usefulness of the batch conformal p-value in the setting where the shift occurs in the tail. We generate two samples from the following distributions:
\begin{align*}
(1)\; X \sim \mathcal{N}(0,1), \qquad 
(2)\; X \sim 0.9 \cdot \mathcal{N}(0,1) + 0.1 \cdot \delta_5,
\end{align*}
where $\delta_5$ denotes the point mass at $5$. In addition to the methods included in the experiment in Section~\ref{ts-de}---with $q=0.9$ for the batch conformal p-value and the permutation test---, we further compare with methods from~\cite{gretton2012kernel}, which introduced the kernel two-sample test based on the maximum mean discrepancy (MMD) statistic, and from~\cite{ramdas2017wasserstein}, which proposed a method based on the Wasserstein distance. We explore small-sample settings where distribution-free methods are typically preferred: (i) $n = m = 30$, and (ii) $n = 50$, $m = 20$, where $n$ and $m$ denote the sample sizes of the data from distributions (1) and (2), respectively.The results are shown in Figure~\ref{fig:two_sample_supp}, illustrating that the batch conformal p-value, with an appropriate choice of the quantile-based test statistic, achieves higher power compared to methods based on MMD or the Wasserstein distance.

\begin{figure}[ht]
    \centering
    \includegraphics[width=0.8\textwidth]{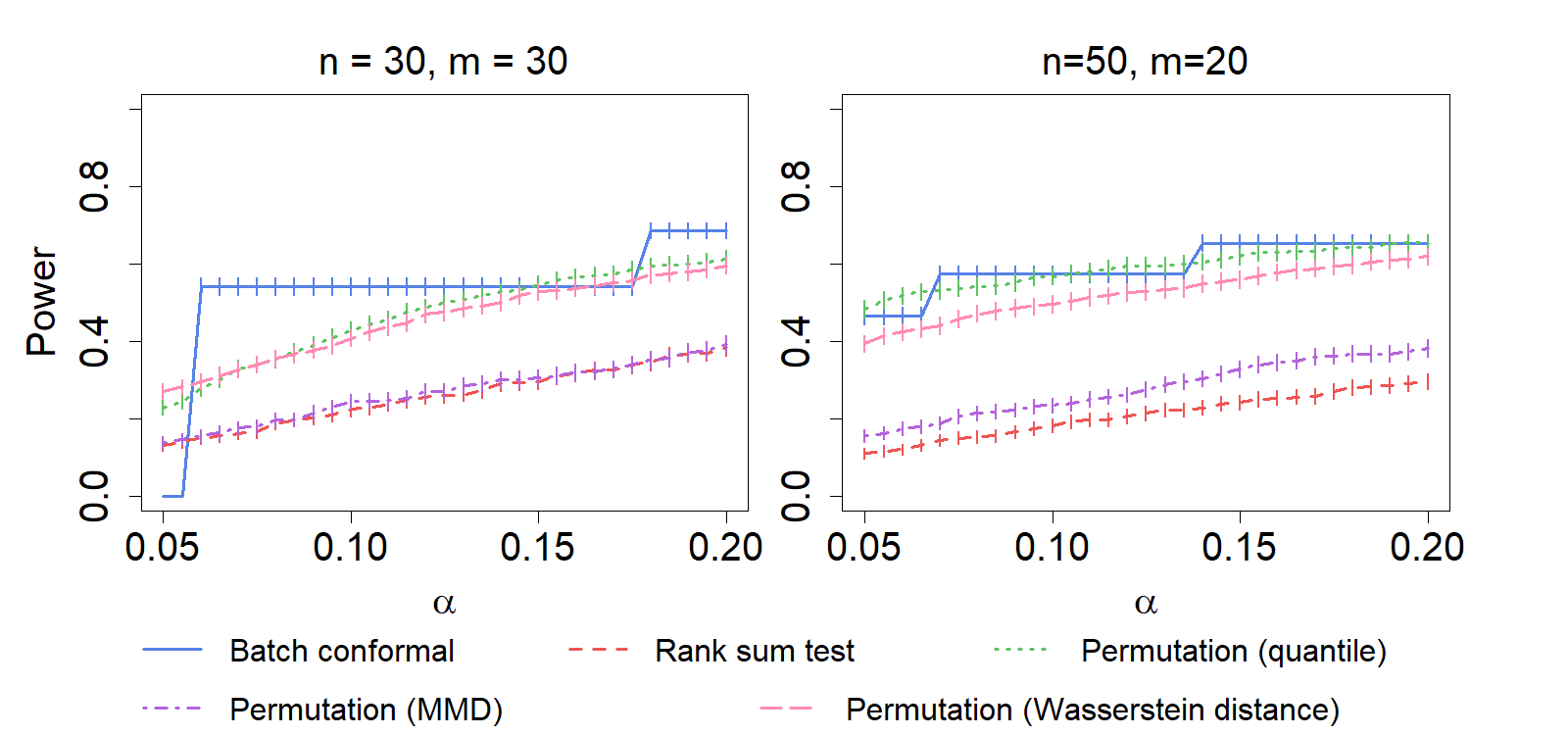}
    \caption{Power of different two sample testing procedures, under two settings of sample sizes.}
    \label{fig:two_sample_supp}
\end{figure}

\section{Proofs of theorems}\label{sec:proofs}

\subsection{Supporting lemmas}

We first introduce supporting lemmas for the validity of batch conformal $p$-values and Theorem~\ref{thm:prds}, establishing properties of the distribution of the ranks of test scores.

\begin{lemma}[\citet{lee2024batch}]\label{lem:ranks_1}
For positive integers $n, m \in \mathbb{N}$, let $S_1, S_2, \dots, S_{n+m} \sim P_S$ be exchangeable random variables, which are distinct almost surely. Let $\bS_{\uparrow} = (S_{(1)}, S_{(2)}, \dots, S_{(n+m)})$ be the vector of their order statistics. For $j \in [m]$, let $T_j = \sum_{k=1}^{n+m} \One{S_{(k)} \leq S_{n+j}}$ denote the rank of $S_{n+j}$ among the $n+m$ scores in increasing order, and let $R_1 < \ldots < R_m$ denote the ordered ranks obtained by rearranging $T_1,\ldots, T_m$. 
Then, 
\begin{equation}\label{eqn:rank_dist}
    (R_1,R_2,\ldots,R_m) \mid \bS_{\uparrow} \sim \textnormal{Unif}(\{(r_1,\ldots,r_m) : 1 \leq r_1 < \ldots < r_m \leq n+m\}).
\end{equation}
\end{lemma}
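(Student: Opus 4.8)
The plan is to reduce the statement to a symmetry argument about a uniformly random permutation. Since the scores $S_1, \dots, S_{n+m}$ are almost surely distinct, there is almost surely a unique ordering permutation $\pi \in \mathcal{S}_{n+m}$ with $S_{(k)} = S_{\pi(k)}$ for all $k$; this $\pi$ records the original index of the $k$-th smallest score, and it is a measurable function of the data. The first step is to show that, conditional on $\bS_{\uparrow}$, this permutation $\pi$ is uniformly distributed over $\mathcal{S}_{n+m}$.

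To establish this, I would exploit exchangeability directly. For any fixed $\tau \in \mathcal{S}_{n+m}$, relabeling the indices via $S_i' = S_{\tau(i)}$ leaves the order-statistic vector $\bS_{\uparrow}$ unchanged, while transforming the ordering permutation into $\tau^{-1} \circ \pi$ (using distinctness to identify indices unambiguously). By exchangeability, $(S_1', \dots, S_{n+m}')$ has the same law as $(S_1, \dots, S_{n+m})$, so the joint law of $(\bS_{\uparrow}, \pi)$ coincides with that of $(\bS_{\uparrow}, \tau^{-1}\pi)$. As $\tau$ ranges over $\mathcal{S}_{n+m}$, so does $\tau^{-1}$, and hence the conditional law of $\pi$ given $\bS_{\uparrow}$ is invariant under left multiplication by every group element; the only distribution on $\mathcal{S}_{n+m}$ with this invariance is the uniform one.

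Next, I would translate the ranks into this language. The rank $T_j$ of the test score $S_{n+j}$ is the position $k$ with $\pi(k) = n+j$, so the unordered set of test-score ranks is exactly $\{T_1, \dots, T_m\} = \pi^{-1}(\{n+1, \dots, n+m\})$. For a uniformly random $\pi$, a direct count shows this preimage is a uniformly random $m$-subset of $[n+m]$: for any fixed $m$-set $A$, the number of permutations with $\pi^{-1}(\{n+1,\dots,n+m\}) = A$ equals $m!\,n!$, independent of $A$, since $\pi$ must biject $A$ onto $\{n+1, \dots, n+m\}$ and $A^c$ onto $[n]$. Sorting $\{T_1, \dots, T_m\}$ into increasing order $R_1 < \dots < R_m$ is a bijection between $m$-subsets of $[n+m]$ and increasing sequences $1 \leq r_1 < \dots < r_m \leq n+m$, so the conditional law of $(R_1, \dots, R_m)$ given $\bS_{\uparrow}$ is uniform over the latter set, which is exactly~\eqref{eqn:rank_dist}.

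The main obstacle I anticipate is making the conditioning on the continuous random vector $\bS_{\uparrow}$ fully rigorous, specifically arguing that the distributional invariance of $(\bS_{\uparrow}, \pi)$ under $\pi \mapsto \tau^{-1}\pi$ descends to a statement about the regular conditional law of $\pi$ given $\bS_{\uparrow}$. I would handle this by verifying the invariance on events of product form $\{\bS_{\uparrow} \in B,\ \pi = \sigma\}$ and then invoking a standard regular-conditional-probability argument, with the almost-sure distinctness guaranteeing throughout that $\pi$ is well defined. Everything downstream is exact finite counting, so this measure-theoretic bookkeeping is the only genuinely delicate point.
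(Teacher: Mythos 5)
Your proof is correct, and it takes essentially the same route as the paper: the paper dispatches this lemma by observing that conditioning on $\bS_{\uparrow}$ preserves exchangeability and deferring the counting to \citet{lee2024batch}, which is precisely the symmetry your argument makes explicit. You simply provide the self-contained version of that sketch---conditional on $\bS_{\uparrow}$ the ordering permutation is uniform on $\mathcal{S}_{n+m}$ by relabeling invariance, the rank set is its preimage of $\{n+1,\ldots,n+m\}$, and the $m!\,n!$ count per $m$-subset gives uniformity over increasing sequences---with the measure-theoretic conditioning point handled correctly (and it is harmless here, since the permutation takes finitely many values).
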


We omit the proof of Lemma~\ref{lem:ranks_1} as it is a direct consequence of the exchangeability condition and is also established in~\citet{lee2024batch}.\footnote{\citet{lee2024batch} derives that the marginal distribution of the vector $(R_1, \ldots, R_m)$ is uniform from the exchangeability condition. However, since conditioning on $\bS_{\uparrow}$---or equivalently, conditioning on the set of scores---does not break the exchangeability, the same steps hold for the conditional distribution of $(R_1, \ldots, R_m)$ given $\bS_{\uparrow}$.}

\begin{lemma}[Stochastic order relations between ranks]\label{lem:ranks_2}
If the random variables $R_1,\ldots,R_m$ follow the distribution~\eqref{eqn:rank_dist}, then:
\begin{enumerate}
    \item For any $t \in [m]$, $(R_1, \ldots, R_{t-1}) \indep (R_{t+1}, \ldots, R_m) \mid (R_t, \bS_{\uparrow})$.
    \item For any $t \in [m]$, $r \in [n+m]$ and $1 \le q \le q' \le n+m$, the following inequalities hold:
    \begin{align*}
        &\PPst{R_{t-1} \leq r}{R_t = q, \bS_{\uparrow}} \geq \PPst{R_{t-1} \leq r}{R_t = q', \bS_{\uparrow}}.\\
        &\PPst{R_{t+1} \geq r}{R_t = q, \bS_{\uparrow}} \leq \PPst{R_{t+1} \geq r}{R_t = q', \bS_{\uparrow}}.
    \end{align*}
    
\end{enumerate}
\end{lemma}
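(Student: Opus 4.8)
The plan is to prove the two parts of Lemma~\ref{lem:ranks_2} by exploiting the explicit uniform conditional law of the ordered ranks given in Lemma~\ref{lem:ranks_1}, reducing everything to combinatorial counting over lattice points. Since $(R_1,\ldots,R_m)\mid \bS_{\uparrow}$ is uniform over strictly increasing integer tuples $1\le r_1<\cdots<r_m\le n+m$, all conditional probabilities are ratios of counts of such tuples subject to linear constraints. I would first observe that conditioning on $\bS_{\uparrow}$ plays no role beyond fixing the support, so throughout I can suppress it and work purely with the uniform distribution on the simplex of increasing tuples.

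For part (1), the conditional independence $(R_1,\ldots,R_{t-1})\indep (R_{t+1},\ldots,R_m)\mid (R_t,\bS_{\uparrow})$, I would argue directly from the uniform structure: fixing $R_t=q$ decouples the increasing constraint into two independent blocks. The tuple must satisfy $1\le r_1<\cdots<r_{t-1}<q$ and $q<r_{t+1}<\cdots<r_m\le n+m$, with no constraint linking the two blocks other than the shared pivot $q$. Hence the conditional law factorizes as a product of a uniform distribution on increasing $(t-1)$-tuples in $\{1,\ldots,q-1\}$ and a uniform distribution on increasing $(m-t)$-tuples in $\{q+1,\ldots,n+m\}$, which is exactly the claimed conditional independence. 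This is the easy, structural part.

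For part (2), the stochastic monotonicity statements, I would focus on the first inequality (the second following by the symmetry $r\mapsto n+m+1-r$ that reverses the order of the ranks). I need to show that $\PPst{R_{t-1}\le r}{R_t=q}$ is nonincreasing in $q$ for fixed $r$, i.e.\ that the conditional law of $R_{t-1}$ given $R_t=q$ is stochastically increasing in $q$. By part (1), the conditional law of $R_{t-1}$ given $R_t=q$ depends only on $q$ (not on the upper block), and $R_{t-1}$ is the largest element of a uniform increasing $(t-1)$-tuple drawn from $\{1,\ldots,q-1\}$. The plan is to compute $\PPst{R_{t-1}=j}{R_t=q}=\binom{j-1}{t-2}/\binom{q-1}{t-1}$ for $t-1\le j\le q-1$, and then verify that enlarging $q$ to $q'$ shifts this distribution to the right in the stochastic order. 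Concretely, I would show $\PPst{R_{t-1}\le r}{R_t=q}=\binom{r}{t-1}/\binom{q-1}{t-1}$ (sum of the pointwise probabilities via the hockey-stick identity) and check that this ratio is nonincreasing in $q$, which is immediate since the numerator is fixed and the denominator $\binom{q-1}{t-1}$ is nondecreasing in $q$.

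The main obstacle I anticipate is purely bookkeeping rather than conceptual: handling the boundary and edge cases cleanly (e.g.\ $t=1$, where $R_{t-1}$ is undefined or vacuous, and the ranges where the binomial coefficients vanish), and making sure the stated inequalities hold even at values of $r$ outside the natural support so that the monotonicity claim is literally true as a statement about CDFs on all of $[n+m]$. I would also need to confirm that the symmetry argument for the second inequality correctly maps the ``$\le$'' event for $R_{t-1}$ to the ``$\ge$'' event for $R_{t+1}$ under rank reversal, reindexing $t\mapsto m+1-t$ appropriately. None of these require heavy computation, but they demand care to state the hockey-stick simplification and the monotone-denominator argument without off-by-one errors.
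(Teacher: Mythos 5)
Your proposal is correct and follows essentially the same route as the paper: both parts reduce to counting increasing integer tuples under the uniform conditional law from Lemma~\ref{lem:ranks_1}, with part (1) coming from the decoupling of the blocks below and above the pivot $R_t=q$, and part (2) from a ratio of counts whose numerator does not depend on $q$ and whose denominator is nondecreasing in $q$. The only cosmetic differences are that you evaluate the sums in closed form via the hockey-stick identity and obtain the second inequality by rank reversal, whereas the paper leaves the sums unevaluated and notes the second inequality follows ``similarly.''
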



\begin{proof}[Proof of Lemma~\ref{lem:ranks_2}]
\noindent \textbf{Proof of the first claim. }
 Since the ranks $R_1,\ldots,R_m$ are independent of $\bS_{\uparrow}$ by Lemma~\ref{lem:ranks_1}, it is sufficient to prove that $(R_1,\ldots,R_{t-1})$ and $(R_{t+1}, \ldots, R_m)$ are conditionally independent given $R_t$.
 Let $L = |\{(r_1,\ldots,r_m) : 1 \leq r_1 < \ldots < r_m \leq n+m\}| = \binom{n+m}{m}$. Next, fix any $1 \leq r_1 < \ldots < r_m \leq n+m$, and compute
\begin{align*}
    &\PPst{R_1 = r_1, \ldots, R_{t-1} = r_{t-1}}{R_t = r_t} = \frac{\PP{R_1 = r_1, \ldots, R_{t-1}=r_{t-1}, R_t = r_t}}{\PP{R_t = r_t}}\\
    &= \frac{\sum_{r_t < v_{t+1} < \ldots < v_m \leq n+m} \PP{R_1 = r_1, \ldots, R_{t-1} = r_{t-1}, R_t = r_t, R_{t+1} = v_{t+1}, \ldots, R_m = v_m}}{\sum_{1 < u_1 < \ldots < u_{t-1} < r_t < v_{t+1} < \ldots < v_m \leq n+m} \PP{R_1 = u_1, \ldots, R_{t-1} = u_{t-1}, R_t = r_t, R_{t+1} = v_{t+1}, \ldots, R_m = v_m}}\\
    &= \frac{\frac{1}{L} \cdot \left|\left\{(v_{t+1},\ldots,v_m) : r_t < v_{t+1} < \ldots < v_m \leq n+m\right\}\right|}{\frac{1}{L} \cdot \left|\left\{(u_1, \ldots, u_{t-1}, v_{t+1},\ldots,v_m) : 1 \leq u_1 < \ldots < u_{t-1} < r_t < v_{t+1} < \ldots < v_m \leq n+m\right\}\right|}\\
    &= \frac{\binom{n+m-r_t}{m-t}}{\binom{r_t-1}{t-1} \cdot\binom{n+m-r_t}{m-t}} = \frac{1}{\binom{r_t-1}{t-1}}.
\end{align*}
Similarly, we have $\PPst{R_{t+1} = r_{t+1}, \ldots, R_m = r_m}{R_t = r_t} = 1/\binom{n+m-r_t}{m-t}$. 
Observe that the calculation 
above also reveals that 
$\PP{R_t = r_t} = \frac{1}{L} \cdot \binom{r_t-1}{t-1} \cdot\binom{n+m-r_t}{m-t}$. 
Thus, we have
\begin{multline*}
    \PPst{R_1 = r_1, \ldots, R_{t-1} = r_{t-1}, R_{t+1} = r_{t+1}, \ldots, R_m = r_m}{R_t = r_t}\\
    = \frac{\PP{R_1 = r_1, \ldots, R_{t-1} = r_{t-1}, R_t = r_t, R_{t+1} = r_{t+1}, \ldots, R_m = r_m}}{\PP{R_t = r_t}} = \frac{\frac{1}{L}}{\frac{1}{L} \cdot \binom{r_t-1}{t-1} \cdot\binom{n+m-r_t}{m-t}}\\
    = \PPst{R_1 = r_1, \ldots, R_{t-1} = r_{t-1}}{R_t = r_t}\cdot \PPst{R_{t+1} = r_{t+1}, \ldots, R_m = r_m}{R_t = r_t}.
\end{multline*}
The equality above holds for any sequence $r_1 < \ldots < r_m$, thereby proving the claim.\\

\noindent \textbf{Proof of the second claim. }
Fix $t\ge 2$.
Define $f_q(l) = \PP{R_{t-1} = l, R_t = q}$ for $1\le l < q \le n+m$.
Using similar calculations as before, we find that $f_q(l) = \frac{1}{L} \cdot \binom{l-1}{t-2}\cdot \binom{m+n-q}{m-t}$.
Therefore,
\begin{multline*}
    \PPst{R_{t-1} \leq r}{R_t = q} = \frac{\PP{R_{t-1} \leq r, R_t = q}}{\PP{R_t = q}} = \frac{\sum_{l=1}^r f_q(l)}{\sum_{l=1}^{q-1} f_q(l)} = \frac{\sum_{l=1}^r \binom{l-1}{t-2}}{\sum_{l=1}^{q-1} \binom{l-1}{t-2}} \geq \frac{\sum_{l=1}^r \binom{l-1}{t-2}}{\sum_{l=1}^{q'-1} \binom{l-1}{t-2}}\\ = \PPst{R_{t-1} \leq r}{R_t = q'}.
\end{multline*}
The second inequality can be derived similarly.

\end{proof}

\subsection{Proof of Marginal Validity of Batch Conformal P-values}
\label{pfprop:pval}
For simplicity, we use the 
notations 
for two-sample testing (Section \ref{sec:two_sample}),
i.e., write 
$P^{(k)}=Q$, $n_k=m$, and 
$(S_{1}^{(k)}, \ldots, S_{n_k}^{(k)}):=(S_{n+1}, \ldots, S_{n+m})$.
We first consider the case where the scores are all distinct almost surely. It is sufficient to prove that
\[p = \sum_{i=1}^n w_i \cdot \One{S_{(\eta)}^\text{test} \leq S_{(i)}} + w_{n+1},\qquad\text{ where } w_i = \frac{\binom{\eta+i-2}{\eta-1} \binom{n+m-\eta-i+1}{m-\eta}}{\binom{n+m}{m}}, i\in [n]\]
is super-uniform under $H_0$, for any $n,m$, and $0 \leq \eta \leq m$. 
Define $R_1 < \ldots < R_m$ as in Lemma~\ref{lem:ranks_1}. 
By the definition of $R_\eta$, there are $R_\eta - \eta$ number of reference scores that are smaller than $S_{(\eta)}^\text{test}$. Consequently, for any $i \in [n]$, the following equivalence holds:
\begin{equation}\label{eqn:s_r}
    S_{(\eta)}^\text{test} \leq S_{(i)} \iff R_\eta - \eta +1 \leq i.
\end{equation}
Now we find the distribution of $R_\eta - \eta + 1$. By Lemma~\ref{lem:ranks_1}, for any $i \in [n+1]$,
\begin{align*}
    \PP{R_\eta - \eta +1 = i}
    &= \frac{1}{\binom{n+m}{m}}\cdot\big|\{(r_1,\ldots,r_m) : 1 \leq r_1 < \ldots < r_m \leq n+m, r_\eta = \eta+i-1\}\big|\\
    &= \frac{1}{\binom{n+m}{m}}\cdot\binom{\eta+i-2}{\eta-1} \binom{n+m-\eta-i+1}{m-\eta} = w_i.
\end{align*}
It follows that the cumulative distribution function of $R_\eta - \eta +1$ is $x\mapsto F_{R_\eta - \eta+1}(x) = \sum_{i=1}^{n+1} w_i\One{i \leq x}$. Therefore, for any $\alpha \in (0,1)$,
\begin{multline}\label{eqn:ineq_1}
    \PP{p \leq \alpha} = \PP{\sum_{i=1}^n w_i \One{S_{(\eta)}^\text{test} \leq S_{(i)}} + w_{n+1} \leq \alpha} = \PP{\sum_{i=1}^{n+1} w_i \One{R_\eta - \eta +1 \leq i} \leq \alpha}\\
    = \PP{1-F_{R_\eta - \eta+1}(R_\eta - \eta) \leq \alpha} = \PP{F_{R_\eta - \eta+1}(R_\eta - \eta) \geq 1-\alpha}\\
    \leq \PP{F_{R_\eta - \eta+1}(R_\eta - \eta+1) > 1-\alpha} \leq \alpha,
\end{multline}
as desired.

Now, suppose ties exist among the scores, and we apply uniform tie-breaking to determine the ranks. This ensures that the uniform distribution of the rank vector, i.e., the result of Lemma~\ref{lem:ranks_1}, still holds, but we only obtain  
\[S_{(\eta)}^\text{test} \leq S_{(i)} \Longleftarrow R_\eta - \eta +1 \leq i.\]  
instead of relation~\eqref{eqn:s_r}. Nevertheless, this still leads to inequality~\eqref{eqn:ineq_1}, provided that the second equality is replaced with an inequality.

\subsection{Proof of Theorem~\ref{thm:prds}}
\label{pfthm:prds}

 We need to show that for each $k$ such that $H_k$ is true,
 \(x\mapsto \PPst{(p_1,\ldots,p_K) \in A}{p_k = x}\)
 is nondecreasing over its domain of definition (i.e., over the set of $x$s where the probability is well-defined), 
  for any increasing set $A \subset \R^{K}$. Without loss of generality, suppose $H_1$ is true, and fix any increasing set $A \subset \R^{K}$ and $x \in (0,1)$. 
 
For conciseness, we denote the comparison 
sample size as  $m := n_1$, and 
 the comparison scores as $S_{n+j} := S_j^{(1)}$, $j\in [m]$.
Let $\bS_{(i)}$, $i\in[n+m]$ be the $i$-th order statistic of $S_1,\ldots, S_{n+m}$---we write $\bS_{(i)}$ to distinguish it from
the $i$-th order statistic  $S_{(i)}$ of $S_1,\ldots,S_n$. 
Let $\bS_{\uparrow} = (\bS_{(1)}, \ldots, \bS_{(n+m)})$ and $S_{\uparrow} = (S_{(1)}, \ldots, S_{(n)})$.
Let us define $R_1, R_2,\ldots, R_m$ as in Lemma~\ref{lem:ranks_1}. 
For simplicity, write $\eta := \eta_1$. 

By definition, the batch conformal $p$-value $p_1$ from \eqref{eqn:p_val_k} can be written as
 \begin{equation}\label{eqn:p_1}
      p_1 = \sum_{i=1}^n w_i \cdot \mathbf{1}\{S_{(\eta)}^{(1)} \leq S_{(i)}\} + w_{n+1} 
 = \sum_{i : S_{(i)} \geq \bar{S}_{(R_\eta)}} w_i + w_{n+1} 
 = \sum_{i \geq R_\eta - \eta + 1} w_i + w_{n+1},
 \end{equation} 
 where $w_i = \binom{i+\eta-2}{\eta-1}\binom{n+m-i-\eta+1}{m-\eta}/\binom{n+m}{m}$ for $i\in [n+1]$. 
 Therefore, the event $p_1 = x$ can be written in the form $R_\eta = r$ for some $r = r(x) \in [n+m]$. 
 Further, by inspection, $p_1$ is a decreasing function of $R_\eta$.
Putting everything together, our task reduces to proving that
\(r\mapsto\PPst{(p_1,\ldots,p_K) \in A}{R_\eta=r}\)
is non-increasing in $r$.

Recalling $S_{\uparrow} = (S_{(1)}, \ldots, S_{(n)})$,
observe that
 \begin{multline}\label{eqn:prds_1}
     \PPst{(p_1,\ldots,p_K) \in A}{R_\eta=r} = \EEst{\PPst{(p_1,\ldots,p_K) \in A}{R_\eta=r, S_{\uparrow}}}{R_\eta = r}\\
     \EEst{\PPst{(p_1^r,p_2,\ldots,p_K) \in A}{R_\eta=r, S_{\uparrow}}}{R_\eta = r} = \Ep{S_{\uparrow} \sim P_{S_{\uparrow} \mid R_\eta = r}}{\PPst{(p_1^r,p_2,\ldots,p_K) \in A}{S_{\uparrow}}},
 \end{multline}
 where $p_1^r$ is defined as in~\eqref{eqn:p_1} but with $R_\eta$ replaced with $r$. Now we examine the distribution of $S_{\uparrow}$ given $R_\eta = r$. By definition of $R_1,\ldots, R_m$ from Lemma \ref{lem:ranks_1} 
as the ordered ranks of 
$S_{n+1}, \ldots, S_{n+m}$ among $\bS_{\uparrow} = (\bar S_{(1)}, \bar S_{(2)}, \dots, \bar S_{(n+m)})$, 
we can write the remaining ordered ranks 
$S_{\uparrow} = (S_{(1)}, \ldots, S_{(n)})$ as 
$\bS_{\uparrow}$ with the coordinates $\bS_{(R_j)}$, $j\in [m]$, removed: 
\begin{equation}\label{eqn:S_0}
    S_{\uparrow} = (\bS_{(1)}, \ldots, \bS_{(R_1-1)}, \bS_{(R_1+1)}, \ldots, \bS_{(R_2-1)}, \bS_{(R_2+1)},\ldots, \bS_{(R_m-1)}, \bS_{(R_m+1)}, \ldots ,\bS_{(n+m)}).
\end{equation}
Thus, $S_{\uparrow}$ is fully determined by $\bS_{\uparrow}$ and $(R_1,\ldots, R_m)$. Therefore, the conditional distribution of $S_{\uparrow}$ given $\bS_{\uparrow}$ and $R_\eta = r$ is determined by the conditional distribution of $(R_1,\ldots, R_m)$ given $\bS_{\uparrow}$ and $R_\eta = r$.
We will leverage this result in the proof.

Next, we will show the following lemma, which allows to represent conditional distributions
of $S_{\uparrow} \mid R_\eta$ for two different values of $R_\eta$ as a jointly distributed pair of random vectors decreasing in $R_\eta$.
Below, for two random objects $U,V$, $U \stackrel{d}{=} V$ denotes that they have the same distribution.

\begin{lemma}[Almost sure representation of conditional distributions
of $S_{\uparrow} \mid R_\eta$ for two values of $R_\eta$]
For any $r_1,r_2\in [n+m]$, 
$r_1 < r_2$, there exist random vectors $V_1$ and $V_2$ over $\R^n$ such that the following two conditions hold with $S_{\uparrow} = (S_{(1)}, \ldots, S_{(n)})$:
\[\textnormal{ (1) } V_1 \stackrel{d}{=} (S_{\uparrow} \mid R_\eta = r_1) 
\text{ and } 
V_2 \stackrel{d}{=} (S_{\uparrow} \mid R_\eta = r_2) \qquad\qquad 
\textnormal{ (2) } V_1 \succeq V_2 \textnormal{ almost surely}.\]
\end{lemma}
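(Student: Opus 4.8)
The plan is to move the problem from the reference order statistics $S_{\uparrow}$ to the rank vector $(R_1,\ldots,R_m)$, build a coordinatewise-monotone coupling of the two conditional rank laws, and then exploit the fact that pushing the ranks to the right forces the reference order statistics down. By Lemma~\ref{lem:ranks_1} the ranks are independent of $\bS_{\uparrow}$, and by~\eqref{eqn:S_0} the vector $S_{\uparrow}$ is a deterministic function $g(\bS_{\uparrow}, R_1,\ldots,R_m)$ obtained by deleting the coordinates of $\bS_{\uparrow}$ at the rank positions. Consequently the conditional law $(S_{\uparrow}\mid R_\eta=r)$ is that of $g(\bS_{\uparrow}, R)$ with $\bS_{\uparrow}$ drawn from its marginal and $R$ drawn, independently, from $((R_1,\ldots,R_m)\mid R_\eta=r)$. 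Thus it suffices, on a single probability space carrying one draw of $\bS_{\uparrow}$ from its marginal, to construct rank vectors $\tilde R\stackrel{d}{=}((R_1,\ldots,R_m)\mid R_\eta=r_1)$ and $\hat R\stackrel{d}{=}((R_1,\ldots,R_m)\mid R_\eta=r_2)$ with $\tilde R_j\le \hat R_j$ for all $j$ almost surely; setting $V_1=g(\bS_{\uparrow},\tilde R)$ and $V_2=g(\bS_{\uparrow},\hat R)$ then gives condition~(1) at once.

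For condition~(2) I would verify that coordinatewise domination of the ranks reverses to domination of the order statistics. Fix $\bS_{\uparrow}$ and let $\tilde p_1<\cdots<\tilde p_n$ and $\hat p_1<\cdots<\hat p_n$ enumerate $[n+m]\setminus\{\tilde R_j\}$ and $[n+m]\setminus\{\hat R_j\}$, so that $V_{1,i}=\bS_{(\tilde p_i)}$ and $V_{2,i}=\bS_{(\hat p_i)}$. Since $\tilde R_j\le \hat R_j$ forces $\{j:\hat R_j\le v\}\subseteq\{j:\tilde R_j\le v\}$ for every $v$, the number of non-rank positions at most $v$ is never larger under $\tilde R$ than under $\hat R$; hence the $i$-th available position satisfies $\tilde p_i\ge \hat p_i$, and monotonicity of $\bS_{(\cdot)}$ yields $V_{1,i}\ge V_{2,i}$ for all $i$, i.e.\ $V_1\succeq V_2$.

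The crux is constructing the dominating rank coupling, and here I would use the Markov structure of $(R_1,\ldots,R_m)$. Lemma~\ref{lem:ranks_2}(1) gives $(R_1,\ldots,R_{t-1})\indep(R_{t+1},\ldots,R_m)\mid R_t$, so conditionally on $R_\eta$ the left block $(R_{\eta-1},\ldots,R_1)$ and the right block $(R_{\eta+1},\ldots,R_m)$ are independent, each evolving as a Markov chain whose one-step transition depends only on the current rank. I would propagate outward from the base values $\tilde R_\eta=r_1\le r_2=\hat R_\eta$, coupling the two independent half-chains separately. Lemma~\ref{lem:ranks_2}(2) states that the transitions $R_{t+1}\mid R_t=q$ and $R_{t-1}\mid R_t=q$ are stochastically increasing in $q$; so, under the induction hypothesis $\tilde R_t\le \hat R_t$, the laws of the next rank are stochastically ordered, and Strassen's theorem~\citep{strassen1965existence} supplies a coupling with $\tilde R_{t+1}\le \hat R_{t+1}$ (equivalently $\tilde R_{t-1}\le\hat R_{t-1}$ on the left). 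Running the induction to both endpoints produces $\tilde R\preceq\hat R$ almost surely with the prescribed conditional marginals.

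The main obstacle I anticipate is making this inductive Strassen coupling rigorous across the whole chain: one must confirm the Markov property in exactly the form needed so that the step law depends only on the current rank, check that the stochastic-ordering hypothesis of Strassen's theorem is preserved at each step under $\tilde R_t\le\hat R_t$, and assemble the two independent half-chains and the independent draw of $\bS_{\uparrow}$ consistently on one probability space. Once this bookkeeping is carried out, the monotonicity transfer and the identification of $V_1,V_2$ with the two conditional laws are immediate.
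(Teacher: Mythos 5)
Your proposal is correct and follows essentially the same route as the paper's proof: reduce to a coordinatewise-monotone coupling of the rank vectors, build that coupling by propagating outward from position $\eta$ using the conditional-independence (Markov) property of Lemma~\ref{lem:ranks_2}(1), the stochastic monotonicity of Lemma~\ref{lem:ranks_2}(2), and Strassen's theorem, and then map back to the reference order statistics via the deletion representation~\eqref{eqn:S_0}. The only difference is that you spell out explicitly why coordinatewise domination of the ranks reverses into domination of the remaining order statistics (via the counting argument $\tilde p_i \ge \hat p_i$), a step the paper asserts without detail, so your write-up is if anything slightly more complete on that point.
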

\begin{proof}
We first draw $\bS_{\uparrow} = (\bS_{(1)}, \ldots, \bS_{(n+m)})$, and then show that there exist random vectors $(\tilde{R}_1,\ldots, \tilde{R}_m)$ and $(\hat{R}_1,\ldots, \hat{R}_m)$ such that the following conditions hold:
\begin{equation}\label{eqn:rank_alt}
\begin{split}
    &(\tilde{R}_1,\ldots, \tilde{R}_m) \mid \bS_{\uparrow} \stackrel{d}{=} (R_1,\ldots,R_m) \mid R_\eta = r_1, \bS_{\uparrow}\\
    &(\hat{R}_1,\ldots, \hat{R}_m) \mid \bS_{\uparrow} \stackrel{d}{=} (R_1,\ldots,R_m) \mid R_\eta = r_2, \bS_{\uparrow}\\
    &(\tilde{R}_1,\ldots, \tilde{R}_m) \preceq (\hat{R}_1,\ldots, \hat{R}_m) \textnormal{ almost surely}.
    \end{split}
\end{equation}
Then we will argue that the claim of the lemma follows from the observation~\eqref{eqn:S_0}, and in particular since $S_{\uparrow}$ is determined by $\bS_{\uparrow}$ and $(R_1,\ldots, R_m)$. 

Now, 
we condition on the event $\bS_{\uparrow} = \bs_{\uparrow}$ for some realization $\bs_{\uparrow}$, and construct $\tilde{R}_j$s and $\hat{R}_j$s by induction on $m$.
First, we observe that for any $t \in [m]$, $t\ge 2$, and
$q,q' \in [m+n]$,
$q \le q'$, there exist random variables $\tilde{R}_{t-1,q,q'}$ and $\hat{R}_{t-1,q,q'}$ such that the following conditions hold. 
\begin{align*}
&\tilde{R}_{t-1,q,q'} \mid  \bS_{\uparrow} = \bs_{\uparrow} \;\stackrel{d}{=}\; R_{t-1} \mid R_t = q, \bS_{\uparrow} = \bs_{\uparrow},\\
&\hat{R}_{t-1,q,q'} \mid  \bS_{\uparrow} = \bs_{\uparrow} \;\stackrel{d}{=}\; R_{t-1} \mid R_t = q', \bS_{\uparrow} = \bs_{\uparrow},\\
&\tilde{R}_{t-1,q,q'} \leq \hat{R}_{t-1,q,q'} \quad\text{almost surely}.
\end{align*}
This is a direct consequence of the second claim of Lemma~\ref{lem:ranks_2} and Strassen's theorem~\citep{strassen1965existence}, which implies that for any random variables $X$ and $Y$ such that $X$ stochastically dominates $Y$, there exist random variables $X'$ and $Y'$ such that $X' \stackrel{d}{=} X$, $Y' \stackrel{d}{=} Y$, and $X' \geq Y'$ almost surely. Similarly, for any $t$ and $q \le q'$, there exist random variables $\tilde{R}_{t+1,q,q'}'$ and $\hat{R}_{t+1,q,q'}'$ such that
\begin{align*}
&\tilde{R}_{t+1,q,q'}' \mid \bS_{\uparrow} = \bs_{\uparrow} \;\stackrel{d}{=}\; R_{t+1} \mid R_t = q, \bS_{\uparrow} = \bs_{\uparrow},\\
&\hat{R}_{t+1,q,q'}' \mid \bS_{\uparrow} = \bs_{\uparrow} \;\stackrel{d}{=}\; R_{t+1} \mid R_t = q', \bS_{\uparrow} = \bs_{\uparrow},\\
&\tilde{R}_{t+1,q,q'}' \leq \hat{R}_{t+1,q,q'}'\quad\text{almost surely}
\end{align*}
hold. Moreover, we can construct the 
rank pairs $(\tilde{R}_{t,q,q'},\hat{R}_{t,q,q'})$s and $(\tilde{R}_{t,q,q'}',\hat{R}_{t,q,q'}')$s 
to be jointly independent and also independent of the actual ranks $R_1,\ldots,R_m$ (conditional on $\bS_{\uparrow} = \bs_{\uparrow}$)---by applying Strassen's theorem separately for the two cases.

Next, define the $\tilde{R}_j$s and $\hat{R}_j$s as follows.

\begin{enumerate}
\item Set $\tilde{R}_\eta = r_1$ and $\hat{R}_\eta = r_2$.

\item For $j=\eta-1, \ldots, 2,1$, define $\tilde{R}_{j} = \tilde{R}_{j,\tilde{R}_{j+1}, \hat{R}_{j+1}}$ and $\hat{R}_{j} = \hat{R}_{j,\tilde{R}_{j+1}, \hat{R}_{j+1}}$.

\item For $j=\eta+1, \eta+2, \ldots, m$, define $\tilde{R}_{j} = \tilde{R}_{j,\tilde{R}_{j-1}, \hat{R}_{j-1}}'$ and $\hat{R}_{j} = \hat{R}_{j,\tilde{R}_{j-1}, \hat{R}_{j-1}}'$.
\end{enumerate}
The above construction is well defined, since for each $j\in [m+n]$, $\tilde{R}_j \leq \hat{R}_{j}$ holds by induction.
Then, observe that for each $j < \eta$,
\begin{align*}
    &\tilde{R}_j \mid \tilde{R}_{j+1} = q, \hat{R}_{j+1} = q', \tilde{R}_{j+2}, \tilde{R}_{j+3}, \ldots, \tilde{R}_\eta, \bS_{\uparrow} = \bs_{\uparrow}\\
    &\stackrel{d}{=} \tilde{R}_{j,q,q'} \mid \tilde{R}_{j+1} = q, \hat{R}_{j+1} = q', \tilde{R}_{j+2}, \tilde{R}_{j+3}, \ldots, \tilde{R}_\eta, \bS_{\uparrow} = \bs_{\uparrow}\qquad\textnormal{by definition of $\tilde{R}_j$}\\
    &\stackrel{d}{=} \tilde{R}_{j,q,q'} \mid \bS_{\uparrow} = \bs_{\uparrow} \qquad\textnormal{by the independence of the constructed rank pairs}\\
    &\stackrel{d}{=} R_j \mid R_{j+1} = q, \bS_{\uparrow} = \bs_{\uparrow} \qquad\textnormal{by the construction of $\tilde{R}_{j,q,q'}$}.
\end{align*}
Since the above holds for any $q\in [m+n]$ and the final distribution does not depend on $q'\in [m+n]$, this implies
\begin{align*}
    (\tilde{R}_j \mid \tilde{R}_{j+1}, \tilde{R}_{j+2}, \ldots, \tilde{R}_\eta, \bS_{\uparrow}  
    =  \bs_{\uparrow} )
    &\stackrel{d}{=} R_j \mid R_{j+1}, \bS_{\uparrow} = \bs_{\uparrow}\\
    &\stackrel{d}{=} R_j \mid R_{j+1}, R_{j+2}, \ldots, R_\eta, \bS_{\uparrow} = \bs_{\uparrow},
\end{align*}
where the second equality follows by the first claim of Lemma~\ref{lem:ranks_2}. 
By an analogous argument, we have
\[(\hat{R}_j \mid \hat{R}_{j+1}, \hat{R}_{j+2}, \ldots, \hat{R}_\eta, \bS_{\uparrow} =  \bs_{\uparrow}) 
\stackrel{d}{=} R_j \mid R_{j+1}, R_{j+2}, \ldots, R_\eta, \bS_{\uparrow} = \bs_{\uparrow}\]
for any $j < \eta$. Next, for each $j > \eta$, we have
by similar arguments that
\begin{align*}
&(\tilde{R}_j \mid \tilde{R}_{j-1}, \tilde{R}_{j-2}, \ldots, \tilde{R}_\eta, \ldots, \tilde{R}_{1}, \bS_{\uparrow} =  \bs_{\uparrow}) 
\stackrel{d}{=} R_j \mid R_{j-1}, R_{j-2}, \ldots, R_\eta, \ldots, R_1, \bS_{\uparrow} = \bs_{\uparrow},\\
&(\hat{R}_j \mid \hat{R}_{j-1}, \hat{R}_{j-2}, \ldots, \hat{R}_\eta, \ldots, \hat{R}_{1}, \bS_{\uparrow} =  \bs_{\uparrow}) 
\stackrel{d}{=} R_j \mid R_{j-1}, R_{j-2}, \ldots, R_\eta, \ldots, R_1, \bS_{\uparrow} = \bs_{\uparrow}.
\end{align*}
Therefore, putting everything together, the random vectors $(\tilde{R}_1,\ldots, \tilde{R}_m)$ and $(\hat{R}_1,\ldots, \hat{R}_m)$ satisfy the conditions in~\eqref{eqn:rank_alt}. Now we set $V_1$ and $V_2$ as
\begin{align*}
    V_1 = (\bS_{(1)}, \ldots, \bS_{(\tilde{R}_1-1)}, \bS_{(\tilde{R}_1+1)}, \ldots, \bS_{(\tilde{R}_2-1)}, \bS_{(\tilde{R}_2+1)},\ldots, \bS_{(\tilde{R}_m-1)}, \bS_{(\tilde{R}_m+1)}, \ldots ,\bS_{(n+m)}),\\
    V_2 = (\bS_{(1)}, \ldots, \bS_{(\hat{R}_1-1)}, \bS_{(\hat{R}_1+1)}, \ldots, \bS_{(\hat{R}_2-1)}, \bS_{(\hat{R}_2+1)},\ldots, \bS_{(\hat{R}_m-1)}, \bS_{(\hat{R}_m+1)}, \ldots ,\bS_{(n+m)}).
\end{align*}
Then $V_1 \succeq V_2$ holds almost surely, since $(\tilde{R}_1,\ldots, \tilde{R}_m) \preceq (\hat{R}_1,\ldots, \hat{R}_m)$ holds with probability one. 

Furthermore, from~\eqref{eqn:rank_alt}, we have 
$V_1 \mid \bS_{\uparrow} \stackrel{d}{=} (S_{\uparrow} \mid R_\eta = r_1, \bS_{\uparrow})
\text{ and } 
V_2 \mid \bS_{\uparrow} \stackrel{d}{=} (S_{\uparrow} \mid R_\eta = r_2, \bS_{\uparrow})$. 
By marginalizing over $\bS_{\uparrow}$, we also obtain 
$V_1 \stackrel{d}{=} (S_{\uparrow} \mid R_\eta = r_1)
\text{ and }
V_2 \stackrel{d}{=} (S_{\uparrow} \mid R_\eta = r_2)$.
    \end{proof}

Now, for each $k \in [K]$, let $p_{k}^{(1)}$ and $p_k^{(2)}$ be defined as in~\eqref{eqn:p_val_k}, with $S_{\uparrow}$ replaced by $V_1$ and $V_2$, respectively. 
Now, $p_j^{(1)} \geq p_j^{(2)}$ holds almost surely for all $j \in [K]$ by the definition~\eqref{eqn:p_val_k} and the fact that $V_1 \succeq V_2$ holds almost surely. 
Since $A$ is an increasing set, from \eqref{eqn:prds_1} and the above results, we thus have that
\begin{multline*}
    \PPst{(p_1,\ldots,p_K) \in A}{R_\eta=r_1} = \Ep{S_{\uparrow} \sim P_{S_{\uparrow} \mid R_\eta = r_1}}{\PPst{(p_1^{r_1},p_2,\ldots,p_K) \in A}{S_{\uparrow}}}\\
    = \EE{\PPst{(p_1^{r_1},p_2^{(1)},\ldots,p_{k}^{(1)}) \in A}{V_1}}  = \PP{(p_1^{r_1},p_2^{(1)},\ldots,p_{k}^{(1)}) \in A} \geq \PP{(p_1^{r_2},p_2^{(2)},\ldots,p_K^{(2)}) \in A}\\
    = \PPst{(p_1,\ldots,p_K) \in A}{R_\eta=r_2},
\end{multline*}
as desired. Note that $p_1^{r_1} \geq p_1^{r_2}$ holds by definition.

The second claim of Theorem~\ref{thm:prds} follows from the result of~\citet[Theorem 1.2]{benjamini2001control}.

\subsection{Proof of Theorem~\ref{thm:two_quantiles}}
\label{pfthm:two_quantiles}
Let us define $R_1 < \ldots < R_m$ as in the proof of the validity of batch conformal $p$-values.
By the observation in~\eqref{eqn:s_r}, we have the following equivalence:

\begin{align*}
    S_{(\eta_1)}^\text{test} \leq S_{(\tilde{\eta}_1+t)}, S_{(\eta_2)}^\text{test} \leq S_{(\tilde{\eta}_2+t)} &\iff R_{\eta_1} - \eta_1+1 \leq \tilde{\eta}_1+t \text{ and } R_{\eta_2} - \eta_2+1 \leq \tilde{\eta}_2+t \\
    &\iff T \leq t, \textnormal{ where } T = \max\{R_{\eta_1} - \eta_1 - \tilde{\eta}_1+1, R_{\eta_2} - \eta_2 - \tilde{\eta}_2+1\}.
\end{align*}
Next, we compute the probability mass function of $T$. For $-\tilde{\eta}_1+1 \leq t \leq n-\tilde{\eta}_2+1$,
\begin{align*}
    \PP{T = t} &= \PP{\max\{R_{\eta_1} - \eta_1 - \tilde{\eta}_1+1, R_{\eta_2} - \eta_2 - \tilde{\eta}_2+1\} = t}\\
    &=  \sum_{\substack{t_1, t_2 \,:\, \max\{t_1,t_2\} = t \\ 1 \leq \eta_1+\tilde{\eta}_1+t_1 + 1 < \eta_2+\tilde{\eta}_2+t_2 + 1 \leq n+m}} \PP{R_{\eta_1} - \eta_1 - \tilde{\eta}_1+1 = t_1, R_{\eta_2} - \eta_2 - \tilde{\eta}_2+1 = t_2}\\
    &= \sum_{\substack{t_1, t_2 \,:\, \max\{t_1,t_2\} = t \\ 1 \leq \eta_1+\tilde{\eta}_1+t_1 + 1 < \eta_2+\tilde{\eta}_2+t_2 + 1 \leq n+m}} \PP{R_{\eta_1} = t_1 + \eta_1 + \tilde{\eta}_1-1, R_{\eta_2} = t_2 + \eta_2 + \tilde{\eta}_2-1}\\
    &= \sum_{\substack{t_1, t_2 \,:\, \max\{t_1,t_2\} = t \\ 1 \leq \eta_1+\tilde{\eta}_1+t_1 + 1 < \eta_2+\tilde{\eta}_2+t_2 + 1 \leq n+m}}\frac{\binom{\eta_1+\tilde{\eta}_1+t_1-2}{\eta_1-1}\binom{\eta_2+\tilde{\eta}_2+t_2-\eta_1+\tilde{\eta}_1+t_1-1}{\eta_2-\eta_1-1}\binom{n+m-\eta_2-\tilde{\eta}_2-t_2+1}{m-\eta_2}}{\binom{n+m}{m}} = w_t.
\end{align*}
In the last line, we have used the definition of $w_t$ from \eqref{eqn:p_two_quantiles}.
Therefore, the cumulative distribution function of $T$ is given by $\tau\mapsto F_T(\tau) = \sum_{t = -\tilde{\eta}_1+1}^{n-\tilde{\eta}_2+1} w_t \One{t \leq \tau}$, and thus we have that
\begin{align*}
    \PP{p \leq \alpha} &= \PP{\sum_{t = -\tilde{\eta}_1+1}^{n-\tilde{\eta}_2+1} w_t \cdot \One{S_{(\eta_1)}^\text{test} \leq S_{(\tilde{\eta}_1+t)}, S_{(\eta_2)}^\text{test} \leq S_{(\tilde{\eta}_2+t)}} \leq \alpha}\\
    &= \PP{\sum_{t = -\tilde{\eta}_1+1}^{n-\tilde{\eta}_2+1} w_t \One{T \leq t} \leq \alpha} = \PP{F_T(T) \leq \alpha} \leq \alpha
\end{align*}
holds for any $\alpha \in (0,1)$.
This finishes the proof.

\section{Positive regression dependence of one-sided two sample normal p-values}\label{sec:z_prds}

Here, we show that the oracle
one-sided two sample $p$-values
\( (p_k)_{1 \leq k \leq K} \), as defined in~\eqref{eqn:p_val_z}, satisfy positive regression dependence. Consequently, applying the Benjamini-Hochberg procedure to these $p$-values ensures valid false discovery rate control.

\begin{proposition}\label{prop:prds_z}
    Suppose the null distribution is $\mathcal{N}(0,\sigma^2)$. Then the $p$-values $p_1,p_2,\ldots,p_K$ defined as 
    \begin{equation}\label{eqn:p_val_z}
p_k = \Phi\left(\frac{\bar{X}_{\mathrm{ref}} - \bar{X}_k}{\sqrt{\frac{\sigma^2}{n} + \frac{\sigma^2}{n_k}}}\right),\quad k=1,2,\ldots,K,
\end{equation} are positive regression dependent on the set of nulls.
\end{proposition}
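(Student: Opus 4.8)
The plan is to exploit the common-factor structure of the $p$-values in~\eqref{eqn:p_val_z}. Write $W := \bar{X}_{\mathrm{ref}}$ for the shared reference mean, $U_k := \bar{X}_k$ for the group-specific means, and $\tau_k := \sqrt{\sigma^2/n + \sigma^2/n_k}$, so that $p_k = \Phi\big((W - U_k)/\tau_k\big)$, with $\mathbf{p} = (p_1,\ldots,p_K)$. By independence of the comparison groups and the reference data, $W, U_1, \ldots, U_K$ are mutually independent; moreover $W \sim \mathcal{N}(0, \sigma^2/n)$, and for every null index $k \in I_0$ we have $U_k \sim \mathcal{N}(0, \sigma^2/n_k)$. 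Each $p_k$ is strictly increasing in $W$ and strictly decreasing in $U_k$, so all coordinates depend \emph{positively} on the single common factor $W$. Fix a null index $k \in I_0$ and an increasing set $A \subseteq \mathbb{R}^K$; by Definition~\ref{def:prds} it suffices to show that $x \mapsto \mathbb{P}[\mathbf{p} \in A \mid p_k = x]$ is nondecreasing on $(0,1)$.

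First I would reduce the conditioning event: since $p_k$ is a strictly increasing, hence almost surely invertible, function of $W - U_k$, conditioning on $p_k = x$ is the same as conditioning on $W - U_k = c(x)$, where $c(x) := \tau_k\,\Phi^{-1}(x)$ is increasing in $x$. Then I would condition additionally on $W$ and apply the tower rule,
\[
\mathbb{P}[\mathbf{p} \in A \mid p_k = x] = \mathbb{E}\big[\,h(W, x) \mid p_k = x\,\big], \qquad h(w, x) := \mathbb{P}[\mathbf{p} \in A \mid W = w,\, p_k = x].
\]
Given $W = w$, the $k$-th coordinate of $\mathbf{p}$ equals $x$, while the remaining coordinates $p_j = \Phi\big((w - U_j)/\tau_j\big)$, $j \neq k$, are independent of one another and each nondecreasing in $w$. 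A coupling that holds the noise $(U_j)_{j \neq k}$ fixed then shows that $h(w, x)$ is nondecreasing in $w$ for fixed $x$ (raising $w$ raises every coordinate $j \neq k$ while coordinate $k$ stays at $x$, and $A$ is increasing) and, by the same reasoning, nondecreasing in $x$ for fixed $w$ (raising $x$ raises only coordinate $k$).

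The remaining ingredient is the stochastic monotonicity of the mixing measure: the conditional law $P_{W \mid p_k = x} = P_{W \mid W - U_k = c(x)}$ must be stochastically increasing in $x$. Because $(W, U_k)$ are independent Gaussians, $(W,\, W - U_k)$ is bivariate Gaussian with $\mathrm{Cov}(W, W - U_k) = \mathrm{Var}(W) > 0$, so $W$ conditioned on $\{W - U_k = c\}$ is Gaussian with mean $\frac{\sigma^2/n}{\sigma^2/n + \sigma^2/n_k}\, c$ increasing in $c$ and with constant variance; being a location family, it is stochastically increasing in $c$, hence in $x$. Combining the three facts, for $x \le x'$,
\[
\mathbb{E}[h(W, x) \mid p_k = x] \le \mathbb{E}[h(W, x') \mid p_k = x] \le \mathbb{E}[h(W, x') \mid p_k = x'],
\]
where the first inequality uses monotonicity of $h$ in its second argument against a common measure, and the second uses monotonicity of $h$ in its first argument together with the stochastic ordering of $P_{W \mid p_k = x'}$ over $P_{W \mid p_k = x}$. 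This establishes positive regression dependence on $I_0$, and the Benjamini--Hochberg false discovery rate guarantee follows from~\citet[Theorem 1.2]{benjamini2001control}. The main obstacle is exactly this last coupling of two effects: conditioning on $p_k = x$ does not reveal the latent common factor $W$, so one cannot simply hold $W$ fixed, and the argument must simultaneously track the upward (stochastic) drift of the unobserved $W$ and the explicit dependence of $h$ on $x$.
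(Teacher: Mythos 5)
Your proof is correct and follows essentially the same route as the paper's: both reduce the event $\{p_k = x\}$ to $\{\bar{X}_{\mathrm{ref}} - \bar{X}_k = c(x)\}$ with $c$ increasing, identify the conditional law of $\bar{X}_{\mathrm{ref}}$ as a Gaussian location family whose mean increases in $x$ (with constant variance), and then exploit independence of the other groups together with the increasing set $A$. The only difference is presentational: the paper realizes the stochastic ordering as an explicit almost-sure coupling $V \le V'$ and compares $\PP{(p_1^V,\ldots,p_K^V) \in A}$ with $\PP{(p_1^{V'},\ldots,p_K^{V'}) \in A}$, whereas you integrate the monotone function $h(w,x)$ against the stochastically ordered conditional laws in a two-step chain — equivalent arguments, and your handling of the $k$-th coordinate's explicit dependence on $x$ is if anything slightly more careful than the paper's.
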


\subsection{Proof of Proposition~\ref{prop:prds_z}}
Fix an increasing set $A \subset \mathbb{R}^{K}$. 
We need to show that for each $k$ such that $H_k$ is true,  
as a function of $x \in (0,1)$,
$x\mapsto\PPst{(p_1, \ldots, p_K) \in A}
{p_k = x}$
is nondecreasing.
Without loss of generality, we assume that $H_1$ is true, and then prove the monotonicity of the above quantity for $k=1$. 
Let $m:=n_1$ for simplicity.

We first investigate the conditional distribution of $\bar{X}_{\mathrm{ref}}$ given $p_1 = x$.
By the definition of $p_1$, the event $p_1 = x$ is equivalent to $\bar{X}_{\mathrm{ref}} - \bar{X}_1 = z$, where $z = \Phi^{-1}(x)\cdot\sigma\sqrt{\frac{1}{n} + \frac{1}{m}}$. 
Since the reference and the comparison datapoints are i.i.d.~$\mathcal{N}(0, \sigma^2)$ under the null,  the conditional distribution of $\bar{X}_{\mathrm{ref}}$ given $\bar{X}_{\mathrm{ref}} - \bar{X}_1 = z$ is $\mathcal{N}\left(\frac{m}{n+m}z, \frac{\sigma^2}{n+m}\right)$. 
Now take any $x'\in(0,1)$, $x' > x$ and let $z' = \Phi^{-1}(x')\cdot\sigma\sqrt{\frac{1}{n} + \frac{1}{m}}$. Since $\Phi^{-1}$ is an increasing function, we have $z' > z$.
This implies that the distribution $\mathcal{N}\left(\frac{m}{n+m}z', \frac{\sigma^2}{n+m}\right)$ is stochastically larger than the distribution $\mathcal{N}\left(\frac{m}{n+m}z, \frac{\sigma^2}{n+m}\right)$.
Therefore, putting everything together, there exist random variables $V_1$ and $V_2$ such that the following conditions hold.
\begin{align*}
    (1)\; V \stackrel{d}{=} \bar{X}_{\mathrm{ref}} \mid p_1 = x,\quad (2)\; V' \stackrel{d}{=} \bar{X}_{\mathrm{ref}} \mid p_1 = x' \quad (3)\; V \leq V' \text{ almost surely}.
\end{align*}
Next, for each $k =2,3,\ldots,K$,
define $p_k^V$ as
$p_k^V = \Phi\left(\frac{V - \bar{X}_k}{\sqrt{\frac{\sigma^2}{n} + \frac{\sigma^2}{n_k}}}\right)$,
and define $p_k^{V'}$ similarly.
By condition (3) above, $p_k^V \leq p_k^{V'}$ holds almost surely, for each $k =2,3,\ldots,K$.
Therefore, we have
\begin{multline*}
    \PPst{(p_1, \ldots, p_K) \in A}{p_k = x} = \PP{(p_1^V, \ldots, p_K^V) \in A} \leq \PP{(p_1^{V'}, \ldots, p_K^{V'}) \in A}\\
    = \PPst{(p_1, \ldots, p_K) \in A}{p_k = x'},
\end{multline*}
where the first and the second equality apply conditions (1) and (2), respectively. 
This proves positive regression dependence.

\section{Experiments with CPS work hours data}
\label{cps}

In this section, we provide experimental results using the 
Current Population Survey (CPS)
dataset.\footnote{This dataset was accessed at \url{https://murraylax.org/datasets/cpshours.csv}.
The dataset has been analyzed at \url{https://murraylax.org/rtutorials/oneway-anova.html}. It includes data from more than 52,000 individuals over the age of 25 years that participated in the 2016 Current Population Survey.
}\footnote{Code to reproduce the experiments in this section is available at \url{https://github.com/yhoon31/batch_conformal}.} 
This experiment is an example of a more straightforward illustration---without additional steps required to apply the procedure, as was necessary in the HALT-C dataset, where we had to account for unobserved counterfactual variables.

The dataset consists of demographic information and education levels of various individuals, and the outcome of interest is their work hours. We form groups based on age, sex, race, and education level. Sex has two categories: “Male” and “Female”. Race has five categories: “White”, “Black”, “Asian/Pacific Islander”, “American Indian/Aleut/Eskimo”, and “Other”. Education level has four categories: “High school”, “Some college”, “Four-year degree”, and “Advanced degree”. Age is divided into four groups: “25–39”, “40–49”, “50–59”, and “60+”. Therefore, there are $2 \cdot 5 \cdot 4 \cdot 4 = 160$ possible groups, and we include 157 groups in the experiments, each of which has at least five datapoints.

The group with the largest sample size---Female/White/Some college/25–39---is set as the reference group. Although the original group contains 2,912 datapoints, we randomly select 100 to simulate a small-sample setting, where distinguishing between distributions is more difficult, and use these as the reference data. Similarly, for comparison groups with more than 50 datapoints, we randomly select 50 to use as the comparison data.

We run our procedure using the outcome variable---work hours---itself as the score, with three test statistics: the median and the quartiles, as in the experiment with the HALT-C dataset. This leads to the detection of groups whose overall work hours are significantly higher compared to the reference group. Table~\ref{tab:work_hour} shows the list of selected groups from the procedure at levels $\alpha = 0.01$ and $0.05$, demonstrating that there are larger differences in higher quantiles.

\begin{table}[htbp]
\centering
\resizebox{0.9\textwidth}{!}{%
\begin{tabular}{llll|cc|cc|cc}
\specialrule{0.6pt}{0pt}{0pt}
\toprule
 &  &  &  & \multicolumn{2}{c|}{$Q_1$} & \multicolumn{2}{c|}{$Q_2$} & \multicolumn{2}{c}{$Q_3$} \\
\midrule
Sex & Race & Education & Age & 0.01 & 0.05 & 0.01 & 0.05 & 0.01 & 0.05 \\
\midrule
\midrule
M & American Indian/Aleut/Eskimo & High School & 25--39 & -- & -- & -- & \checkmark & -- & -- \\
M & Other & High School & 50--59 & -- & \checkmark & -- & -- & -- & -- \\
M & White & High School & 25--39 & -- & -- & -- & \checkmark & -- & -- \\
M & White & High School & 40--49 & -- & -- & -- & \checkmark & -- & \checkmark \\
M & American Indian/Aleut/Eskimo & Some College & 25--39 & -- & -- & -- & \checkmark & -- & -- \\
F & Black & Some College & 40--49 & -- & -- & -- & \checkmark & -- & -- \\
M & Black & Some College & 25--39 & -- & -- & -- & -- & -- & \checkmark \\
M & Black & Some College & 40--49 & -- & -- & -- & -- & -- & \checkmark \\
M & Black & Some College & 50--59 & -- & -- & -- & \checkmark & -- & -- \\
M & Other & Some College & 25--39 & -- & -- & -- & \checkmark & -- & \checkmark \\
M & Other & Some College & 40--49 & -- & \checkmark & \checkmark & \checkmark & \checkmark & \checkmark \\
M & White & Some College & 25--39 & -- & \checkmark & \checkmark & \checkmark & -- & -- \\
M & White & Some College & 40--49 & -- & \checkmark & \checkmark & \checkmark & \checkmark & \checkmark \\
M & White & Some College & 50--59 & -- & -- & -- & -- & -- & \checkmark \\
M & White & Some College & 60+ & -- & -- & -- & -- & \checkmark & \checkmark \\
M & Asian/Pacific Islander & Four Year & 25--39 & -- & -- & -- & \checkmark & -- & -- \\
M & Asian/Pacific Islander & Four Year & 40--49 & -- & \checkmark & \checkmark & \checkmark & -- & \checkmark \\
M & Asian/Pacific Islander & Four Year & 50--59 & -- & \checkmark & -- & -- & -- & -- \\
M & Asian/Pacific Islander & Four Year & 60+ & -- & \checkmark & -- & -- & -- & -- \\
M & Black & Four Year & 40--49 & -- & -- & -- & \checkmark & -- & \checkmark \\
M & Black & Four Year & 50--59 & -- & -- & -- & -- & -- & \checkmark \\
M & Other & Four Year & 50--59 & -- & \checkmark & -- & \checkmark & -- & -- \\
F & White & Four Year & 40--49 & -- & -- & -- & \checkmark & -- & \checkmark \\
M & White & Four Year & 25--39 & -- & -- & -- & \checkmark & -- & \checkmark \\
M & White & Four Year & 40--49 & -- & \checkmark & \checkmark & \checkmark & \checkmark & \checkmark \\
M & White & Four Year & 50--59 & -- & \checkmark & \checkmark & \checkmark & -- & \checkmark \\
M & White & Four Year & 60+ & -- & -- & -- & -- & -- & \checkmark \\
F & Asian/Pacific Islander & Advanced & 40--49 & -- & -- & -- & \checkmark & -- & -- \\
M & Asian/Pacific Islander & Advanced & 25--39 & -- & \checkmark & -- & \checkmark & -- & -- \\
M & Asian/Pacific Islander & Advanced & 40--49 & \checkmark & \checkmark & \checkmark & \checkmark & -- & \checkmark \\
M & Asian/Pacific Islander & Advanced & 50--59 & -- & \checkmark & \checkmark & \checkmark & -- & \checkmark \\
M & Asian/Pacific Islander & Advanced & 60+ & -- & -- & \checkmark & \checkmark & -- & \checkmark \\
F & Black & Advanced & 25--39 & -- & -- & -- & \checkmark & -- & -- \\
F & Black & Advanced & 40--49 & -- & \checkmark & -- & -- & -- & -- \\
M & Black & Advanced & 25--39 & -- & \checkmark & \checkmark & \checkmark & -- & \checkmark \\
M & Black & Advanced & 40--49 & -- & \checkmark & \checkmark & \checkmark & -- & \checkmark \\
M & Black & Advanced & 50--59 & -- & -- & \checkmark & \checkmark & \checkmark & \checkmark \\
F & Other & Advanced & 25--39 & -- & -- & -- & -- & -- & \checkmark \\
F & Other & Advanced & 40--49 & -- & \checkmark & -- & \checkmark & \checkmark & \checkmark \\
F & Other & Advanced & 50--59 & -- & -- & \checkmark & \checkmark & \checkmark & \checkmark \\
M & Other & Advanced & 40--49 & \checkmark & \checkmark & \checkmark & \checkmark & \checkmark & \checkmark \\
M & Other & Advanced & 50--59 & -- & \checkmark & \checkmark & \checkmark & -- & \checkmark \\
M & Other & Advanced & 60+ & -- & \checkmark & -- & -- & -- & -- \\
F & White & Advanced & 40--49 & -- & -- & -- & \checkmark & -- & -- \\
F & White & Advanced & 50--59 & -- & \checkmark & \checkmark & \checkmark & \checkmark & \checkmark \\
M & White & Advanced & 25--39 & -- & \checkmark & \checkmark & \checkmark & \checkmark & \checkmark \\
M & White & Advanced & 40--49 & -- & -- & -- & \checkmark & \checkmark & \checkmark \\
M & White & Advanced & 50--59 & -- & -- & \checkmark & \checkmark & \checkmark & \checkmark \\
M & White & Advanced & 60+ & -- & -- & -- & -- & \checkmark & \checkmark \\
\specialrule{0.6pt}{0pt}{0pt}
\bottomrule
\end{tabular}
}
\caption{Results for the CPS work hours dataset: selected groups at levels $\alpha=0.01$ and $0.05$. The reference group corresponds to Female/White/Some college/25--39.}
\label{tab:work_hour}
\end{table}

\end{document}